\newcommand{\Fqm}{\mathbb{F}_{q^m}}
\newcommand{\Fqt}{\mathbb{F}_{q^t}}
\newcommand{\Fqd}{\mathbb{F}_{q^d}}
\newcommand{\Fq}{\mathbb{F}_q}
\newcommand{\F}{\mathbb{F}}
\newcommand{\cc}{\mathbf{c}}
\newcommand{\MM}{\mathbf{M}}
\newcommand{\0}{\mathbf{0}}
\renewcommand{\H}{\mathcal{H}}
\newcommand{\C}{\mathcal{C}}
\newcommand{\D}{\Delta}
\renewcommand{\S}{\Sigma}
\newcommand{\GG}{\mathbf{G}}
\newcommand{\<}{\left<}
\renewcommand{\>}{\right>}
\newcommand{\rank}{\mathsf{rank}}
\newcommand{\dr}{\mathrm{d}_r}
\renewcommand{\dh}{\mathrm{d}_H}
\renewcommand{\AA}{\mathbf{A}}
\renewcommand{\SS}{\mathbf{S}}
\newcommand{\ee}{\mathbf{e}}
\renewcommand{\aa}{\mathbf{a}}
\newcommand{\bb}{\mathbf{b}}
\newcommand{\xx}{\mathbf{x}}
\newcommand{\yy}{\mathbf{y}}
\newcommand{\fH}{\mathfrak{H}}
\renewcommand{\P}{\mathcal{P}}
\newcommand{\hw}{\mathsf{w}_H}
\newcommand{\PP}{\mathbb{P}}
\begin{document}
\title{Antipodal two-weight rank metric codes\thanks{During the course of this work, the first named author was partially supported by Grant 280731 from the Research Council of Norway and currently, she is supported by a postdoctoral researcher grant by iHub Anubhuti-IIITD Foundation (IHUB), IIIT Delhi. The second named author is partially funded by the National Science Foundation (NSF) grant CNS-1906360.}}
%
%
\author{Rakhi Pratihar\inst{1}\and
Tovohery Hajatiana Randrianarisoa\inst{2}}
\authorrunning{R. Pratihar and T. H. Randrianarisoa}
%
\institute{Indraprastha Institute of Information Technology Delhi, India
\email{pratihar.rakhi@gmail.com} \and
Florida Atlantic University, Boca Raton, Florida, USA\\
\email{tovo@aims.ac.za}
}
\maketitle              
\begin{abstract}
We consider the class of linear antipodal two-weight rank metric codes and discuss their properties and characterization in terms of $t$-spreads. It is shown that the dimension of such codes is $2$ and the minimum rank distance is at least half of the length. We construct antipodal two-weight rank metric codes from certain MRD codes. A complete classification of such codes is obtained, when the minimum rank distance is equal to half of the length. As a consequence of our construction of two-weight rank metric codes, we get some explicit two-weight Hamming metric codes.
\keywords{Rank metric codes  \and Antipodal two-weight \and t-Spreads \and Subspreads}
\end{abstract}

\section{Introduction}

Let $q$ be a prime power and let $\Fq$ be a finite field of size $q$.
For a positive integer $n$, the \emph{Hamming weight} of an element $\cc = (c_1,\dots,c_n)$ in  $\Fq^n$ is defined by 
\[
\hw(\cc) = |\{ i: c_i\neq 0, 1\leq i\leq n\}|.
\]

The function ``$\hw$" induces a metric $\dh$ on
$\Fq^n$ where $\dh(\cc,\cc') = \hw(\cc - \cc')$ for $\cc,\cc'$ in $\Fq^n$.  
An $[n,k,d]$ (linear) Hamming metric code $\C$ over $\Fq$ is an $\Fq$-linear subspace of $\Fq^n$ of dimension $k$ such that $\min_{\cc\in \C\backslash \{\0\}} \hw(\cc) =d$.
A two-weight linear Hamming metric code over $\Fq$ is linear code such that the weight distribution (i.e. the set of possible weights) of $\C$ is of the form $\{0,d,d'\}$ such that $0<d<d'$. 
The terminology is explained by the fact that there are only two possible non-zero weights for the codewords.
An $[n,k,d]$ linear Hamming metric code over $\Fq$ is called \emph{antipodal two-weight} or ATW if $d < n$ and any nonzero codewords have Hamming weight either $d$ or $n$ and there is at least one codeword of Hamming weight $n$.

Linear codes with few distinct weights are important in coding theory, both from practical and theoretical point of view. These classes of codes are for example used in secret sharing schemes  \cite{Massey93,YD02} and authentication schemes \cite{DW05}.
For codes with Hamming metric, the\emph{ constant weight codes} i.e. codes where all the nonzero codewords have the same weight and the \emph{two-weight} linear codes have been studied extensively. 
In \cite{Bon83}, Bonisoli gives a characterization of the constant weight codes.
The class of two-weight linear Hamming metric codes has been investigated by Delsarte \cite{Del78}; see \cite{CK86} for a systematic exposition. A classification of the subclass of ATW linear codes with Hamming metric is obtained in \cite{JT18}.

We are interested in the $q$-analogues of some of these results. For linear rank metric codes, the class of constant rank weight codes are completely classified in \cite{Ran20} following a geometric approach. It is proved that, up to equivalence, there is only one non-degenerate constant rank weight code if the code has dimension at least $2$. 
In this article we consider the class of ATW rank metric codes.  Using the classification of constant rank weight codes, we show that the dimension of such codes is $2$ and the minimum distance $d$ must be at least $\frac{n}{2}$. We provide an equivalent characterization of ATW rank metric codes in terms of $t$-spreads, more precisely, as $t$-subspreads of Desarguesian $t$-spreads induced by $q$-systems associated to ATW rank metric codes. It is proved that, up to equivalence, there exists only one ATW code for the case $d=\frac{n}{2}$ and a complete classification of such codes is also obtained. For the case of $d > \frac{n}{2}$, we get partial results regarding the classification. We construct ATW rank metric codes using MRD codes of suitable parameters. 

The article is organized as follows. In the next section, we collect some preliminaries about notions such as rank metric codes and $t$-spreads. In Section \ref{sec:3}, we derive some properties of the ATW rank metric codes and $t$-spreads and establish a relation between these two objects.  Section \ref{sec:4} deals with classification and construction of ATW rank metric codes. In Section \ref{sec:5}, we provide a construction of two-weight Hamming metric codes from two-weight rank metric codes. Finally, we conclude the paper with a few open questions.

\section{Preliminaries}\label{sec:2}
Throughout, we use $\aa \cdot \bb$ to denote the usual dot product of two vectors $\aa, \bb\in\Fqm^n$. For $\aa=(a_1,\dots,a_n)\in \Fqm^n$, we denote by $\<a_1,\dots,a_n\>_{\Fq}$ the $\Fq$-subspace of $\Fqm$ generated by $\{a_1,\dots,a_n\}$.

\subsection{Rank metric codes}\label{sub:2.1}
Let $\Fqm/\Fq$ be a finite field extension of degree $m$. We endow the space $\Fqm^n$ with a norm, called $\rank$, where $\rank(\cc) = \dim_{\Fq}\<c_1,\dots,c_n\>_{\Fq}$ for any $\cc = (c_1,\dots,c_n)$ in  $\Fqm^n$. This norm induces a metric on $\Fqm^n$, which is defined by 

\[
\dr(\cc_1,\cc_2)=\rank(\cc_1-\cc_2), \text{ for } \cc_1, \cc_2 \in \Fqm^n.
\]

\begin{definition}[Rank metric codes]
Let $\Fqm/\Fq$ be a field extension of degree $m$. An $[n,k,d]$ rank metric code $\C$ over $\Fqm/\Fq$ is a $k$-dimensional $\Fqm$-subspace of $\Fqm^n$ such that $d=\min \{\rank(\cc) \colon \cc\in \C\backslash \{\0\}\}$. Here $d$ is called the minimum (rank) distance of the code.
\end{definition}

When we are not concerned with the minimum distance, we simply call the codes as $[n,k]$ rank metric codes. In \cite{Del78}, it was shown that for any $[n,k,d]$ rank metric code over $\Fqm/\Fq$, $d\leq n-k+1$. Similar to the Hamming metric case, this inequality is called the Singleton bound and when the bound is attained i.e. $d=n-k+1$, the code is called a \emph{maximum rank distance} (MRD) code. The first construction of MRD codes was independently presented in \cite{Del78,Gab85}. Later, many constructions of new classes of MRD codes were obtained, e.g., in \cite{GK05,PR22,She16} to name a few.

For the purpose of classifying the rank metric codes, we need to have a notion of equivalence of two codes. There are various notions of equivalence of two rank metric codes. In this paper we consider the following notion defined in \cite{Mor14}.
\begin{definition}\cite{Mor14}
Two rank metric codes $\C$ and $\C'$ over $\Fqm/\Fq$ are called equivalent if $\C=\alpha \C' \MM$ for some $\alpha\in \Fqm^{\times}$ and $\MM\in \Fq^{n\times n}$ invertible, where $\C'\MM=\{\cc\MM\colon \cc\in \C'\}$. It is equivalent to saying that any two  generator matrices $\GG$ and $\GG'$, corresponding to $\C$ and $\C'$, respectively, satisfy the relation $\GG=\SS \GG'\MM$, where $\SS\in \Fqm^{k\times k}$ is invertible.
\end{definition}

A rank metric code $\C$ is called \emph{non-degenerate} if the columns of its generator matrix $\GG$ are linearly independent over $\Fq$. Otherwise, $\C$ is called \emph{degenerate} and in this case, $\C$ is equivalent to a code $\{(\cc|\0)\colon \cc\in \C'\}$, where $\C'$ is non-degenerate of shorter length. Throughout this paper, our code is assumed to be non-degenerate unless otherwise specified. 

In \cite{Ran20}, rank metric codes are described in terms of the geometric object called $q$-systems. This geometric description offers the advantage of studying the distance of a code by looking at the hyperplane sections of a corresponding $q$-system. Next we review some of notions related to the geometric description of rank metric codes relevant for our study.

\begin{definition}[$q$-system]
Let $k,n$ be positive integers such that $k \leq n$. An $[n,k]$ $q$-system $X$ in $\Fqm^k$ is an $\Fq$-subspace of $\Fqm^k$ of dimension $n$. If the set $\{\cc_i \in \Fqm^k \colon 1\leq i\leq n\}$ forms an $\Fq$-basis of $X$, then we write $X=\<\cc_1,\dots,\cc_n\>_{\Fq}$.
\end{definition}

Let $\GG$ be a generator matrix of a non-degenerate $[n,k]$ rank metric code $\C$ over $\Fqm/\Fq$. The $\Fq$-space generated by the columns of $\GG$ is an $[n,k]$ $q$-system. As it was shown in \cite[Theorem 2]{Ran20}, this construction gives a one-to-one correspondence between equivalence classes of $q$-systems and rank metric codes. The key point of this approach is the geometric interpretation of rank of a codeword which we explain below. 

\begin{lemma}\label{lem:1}
For a vector $\cc\in \Fqm^n$, $\rank(\cc)=n-\dim_{\Fq} \mathrm{Ker}\,\phi_\cc$ where $\phi_\cc$ is the $\Fq$-linear map defined as
    \begin{align*}
         \phi_\cc\colon \Fq^n &\longrightarrow \Fqm\\
         \aa &\longmapsto \cc\cdot\aa.
    \end{align*}
\end{lemma}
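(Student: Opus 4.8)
The plan is to identify the image of $\phi_\cc$ with the $\Fq$-span of the coordinates of $\cc$ and then invoke the rank--nullity theorem for $\Fq$-linear maps. First I would record that $\phi_\cc$ really is $\Fq$-linear: for $\aa,\bb\in\Fq^n$ and $\lambda\in\Fq$ we have $\phi_\cc(\aa+\lambda\bb)=\cc\cdot(\aa+\lambda\bb)=\cc\cdot\aa+\lambda(\cc\cdot\bb)$, which is legitimate because the scalars $a_i,b_i,\lambda$ lie in $\Fq$ while the entries $c_i$ lie in $\Fqm$, so the operations are compatible with the $\Fq$-vector-space structure carried by $\Fqm$. Note that $\phi_\cc$ is only $\Fq$-linear, not $\Fqm$-linear, which is exactly why the dimensions below are taken over $\Fq$.

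Next I would compute the image. Letting $\ee_1,\dots,\ee_n$ denote the standard basis of $\Fq^n$, we have $\phi_\cc(\ee_i)=\cc\cdot\ee_i=c_i$, and since the image of a linear map is spanned by the images of a basis, $\mathrm{Im}\,\phi_\cc=\langle c_1,\dots,c_n\rangle_{\Fq}$. By the very definition of the rank norm this gives $\dim_{\Fq}\mathrm{Im}\,\phi_\cc=\dim_{\Fq}\langle c_1,\dots,c_n\rangle_{\Fq}=\rank(\cc)$. This identification of the image with the span defining $\rank(\cc)$ is the only step that requires any care, since it is precisely where the definition of the rank enters the argument.

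Finally, applying the rank--nullity theorem to the $\Fq$-linear map $\phi_\cc\colon\Fq^n\to\Fqm$ yields $n=\dim_{\Fq}\Fq^n=\dim_{\Fq}\mathrm{Ker}\,\phi_\cc+\dim_{\Fq}\mathrm{Im}\,\phi_\cc$; substituting $\dim_{\Fq}\mathrm{Im}\,\phi_\cc=\rank(\cc)$ and rearranging gives $\rank(\cc)=n-\dim_{\Fq}\mathrm{Ker}\,\phi_\cc$, as claimed. I do not anticipate a genuine obstacle here: once the image is correctly identified the statement is essentially a reformulation of rank--nullity, and the only thing to keep straight throughout is that every dimension is computed over $\Fq$, since $\phi_\cc$ is defined on an $\Fq$-vector space and fails to be $\Fqm$-linear.
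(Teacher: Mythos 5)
Your proof is correct and follows exactly the route the paper intends: the paper's own proof simply states that the lemma is a straightforward consequence of the rank--nullity theorem, and your argument fills in precisely the details behind that claim (the $\Fq$-linearity of $\phi_\cc$ and the identification $\mathrm{Im}\,\phi_\cc=\langle c_1,\dots,c_n\rangle_{\Fq}$).
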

    
\begin{proof}
It is a straightforward consequence of the rank-nullity theorem.
\qed\end{proof}

For any $\xx \in \Fqm^k\backslash \0$, we consider the following $\Fqm$-linear map 

\begin{align}\label{eq:psi}
    \psi_{\xx}\colon\Fqm^k &\longrightarrow \Fqm\\
    \nonumber\ee & \longmapsto \xx\cdot\ee.
\end{align}

It is easy to see that kernel of $\psi_{\xx}$ defines a unique hyperplane, which we name as $H_{\xx}$. In fact, any hyperplane of $\Fqm^k$ can be seen as kernel of $\phi_{\xx}$ for some $\xx \in \Fqm^k$. 

Now let $\C$ be an $[n,k]$ rank metric code over $\Fqm/\Fq$. Any codeword $\cc \in \C$ is of the form $\cc=\xx_\cc\GG$ for some $\xx_\cc\in \Fqm^k$. Then $\cc$ defines a unique hyperplane $H_\cc = \mathrm{Ker}~ (\psi_{\xx_{\cc}})$ of $\Fqm^k$.
Conversely, from the discussion above, a hyperplane $H$ of $\Fqm^k$ corresponds to a codeword $\cc=\xx\GG$ defined by a vector $\xx\in \Fqm^k$. 

\begin{lemma}\label{lem:2}
Let $\C$ be an $[n,k]$ (non-degenerate) rank metric code over $\Fqm/\Fq$ and let $X$ be the $[n,k]$ $q$-system corresponding to a generator matrix $\GG$ of $\C$. Then for any $\cc = {\xx_{\cc}\GG} \in \C$,
\begin{equation}\label{eq:rank}
\rank(\cc)=n-\dim_{\Fq} \left( X\cap H_{\cc} \right), \text{ where}
\end{equation}
$H_{\cc}=\mathrm{Ker}~ \psi_{\xx_{\cc}}$ is considered as an $\Fq$-subspace of $\Fqm^k$ of dimension $m(k-1)$.
\end{lemma}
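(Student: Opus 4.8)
The plan is to deduce the statement directly from Lemma~\ref{lem:1} by identifying $\mathrm{Ker}\,\phi_\cc$ with $X\cap H_\cc$ through the coordinate isomorphism between $\Fq^n$ and $X$. Write the columns of $\GG$ as $\cc_1,\dots,\cc_n\in\Fqm^k$. Since $\C$ is non-degenerate, these columns are $\Fq$-linearly independent and hence form an $\Fq$-basis of $X=\<\cc_1,\dots,\cc_n\>_{\Fq}$. Consequently the map $\theta\colon\Fq^n\to X$, $\aa=(a_1,\dots,a_n)\mapsto\sum_{i=1}^n a_i\cc_i$, is an $\Fq$-linear isomorphism onto $X$.

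The key computation is to factor $\phi_\cc$ through $\psi_{\xx_\cc}$. First I would record that the $i$-th entry of $\cc=\xx_\cc\GG$ is precisely $\xx_\cc\cdot\cc_i=\psi_{\xx_\cc}(\cc_i)$. Then for any $\aa\in\Fq^n$,
\[
\phi_\cc(\aa)=\cc\cdot\aa=\sum_{i=1}^n a_i\,\psi_{\xx_\cc}(\cc_i)=\psi_{\xx_\cc}\!\left(\sum_{i=1}^n a_i\cc_i\right)=\psi_{\xx_\cc}(\theta(\aa)),
\]
where the third equality uses that each scalar $a_i$ lies in $\Fq\subseteq\Fqm$ together with the $\Fqm$-linearity of $\psi_{\xx_\cc}$. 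Hence $\phi_\cc=\psi_{\xx_\cc}\circ\theta$ as maps on $\Fq^n$.

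From this factorization, $\aa\in\mathrm{Ker}\,\phi_\cc$ if and only if $\theta(\aa)\in\mathrm{Ker}\,\psi_{\xx_\cc}=H_\cc$; and since $\theta(\aa)$ already lies in $X$, this is equivalent to $\theta(\aa)\in X\cap H_\cc$. As $\theta$ is an isomorphism onto $X$, it restricts to an isomorphism $\mathrm{Ker}\,\phi_\cc\cong X\cap H_\cc$, giving $\dim_{\Fq}\mathrm{Ker}\,\phi_\cc=\dim_{\Fq}(X\cap H_\cc)$. Substituting into Lemma~\ref{lem:1} yields $\rank(\cc)=n-\dim_{\Fq}\mathrm{Ker}\,\phi_\cc=n-\dim_{\Fq}(X\cap H_\cc)$, as claimed. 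The parenthetical remark on dimensions is then immediate: $H_\cc$ is an $\Fqm$-hyperplane of $\Fqm^k$, so it has $\Fqm$-dimension $k-1$ and therefore $\Fq$-dimension $m(k-1)$.

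I expect no genuine obstacle here. The only point requiring care is that the coordinates $a_i$ must be $\Fq$-scalars, so that they pass through the $\Fqm$-linear map $\psi_{\xx_\cc}$ and allow $\phi_\cc$ to factor as $\psi_{\xx_\cc}\circ\theta$; everything else is the rank--nullity bookkeeping already packaged in Lemma~\ref{lem:1}.
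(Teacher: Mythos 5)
Your proof is correct and follows essentially the same route as the paper: both identify $\mathrm{Ker}\,\phi_\cc$ with $X\cap H_\cc$ via the coordinate map $\aa\mapsto\GG\aa^T$ (your $\theta$), using non-degeneracy to make this a bijection, and then invoke Lemma~\ref{lem:1}. Your presentation as an explicit factorization $\phi_\cc=\psi_{\xx_\cc}\circ\theta$ is a slightly cleaner packaging of the element-by-element correspondence the paper writes out, but the underlying argument is identical.
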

\begin{proof}
Following the definition of $\psi_{\xx_{\cc}}$ in Equation \eqref{eq:psi}, for any $\ee\in \left( X\cap H_{\cc} \right)$, we have $\xx_\cc\cdot \ee=0$. Moreover, $X$ being the $q$-system corresponding to $\GG$, we get $\ee=\GG \aa^T$ for some $\aa\in \Fq^n$. Note that $\aa$ is unique since $\C$ is non-degenerate. This implies that $\cc\cdot \aa = 0$. Therefore, $\ee\in X \cap H_{\cc}$ corresponds to a unique element $\aa \in \mathrm{Ker} ~ \phi_{\cc}$.  Also, the same argument backward shows that any element of $ \mathrm{Ker}\, \phi_\cc$ corresponds to a unique element of $X\cap H_\cc$. Hence $|X\cap H_\cc|= |\mathrm{Ker}\, \phi_\cc|$ and the result follows from Lemma \ref{lem:1}.
\qed\end{proof}

The geometric approach offers the advantage that if the $q$-system $X$ corresponding to the code $\C$ has nice properties, then it can be simpler to study the intersection of $X$ with the hyperplanes of $\Fqm^k$. That makes it easier to study the weights of the codewords of $\C$. Furthermore, if the hyperlanes $\H_\cc$ are replaced with subspaces of arbitrary dimension, then it defines the ``\emph{rank weight}'' of subcodes of $\C$. And these can be used to define the ``generalized rank weights'' of a rank metric code. For more details on the geometric approach, one can look at \cite{Ran20}.
Using the geometric approach, the complete classification of the constant weight rank metric codes has been obtained and we recall the result below.

\begin{definition}\cite[Definition 11]{Ran20}
Let $X$ be the $\Fq$-vector space $\Fqm^k$ of dimension $mk$. A Hadamard rank metric code $H_1(q,m,k)$ is an $[mk,k,m]$ linear code over $\Fqm/\Fq$ with a generator matrix where the columns consist of an $\Fq$-basis of $X=\Fqm^k$.
\end{definition}

Notice that Hadamard rank metric codes are unique up to equivalence. The following theorem is the complete characterization of (linear) constant weight rank metric codes i.e. all non-zero codewords have the same rank $d$ equal to the minimum distance of the code.

\begin{theorem} \cite[Theorem 12]{Ran20}\label{thm:1}
Let $\C$ be an $[n, k, d]$ non-degenerate constant weight linear rank metric code over $\Fqm/\Fq$. We have the following two cases:
\begin{enumerate}[(a)]
\item  If $k = 1$, then $\C = \<(a_1,\dots, a_n)\>_{\Fqm}$ with $\rank(a_1,\dots, a_n) = d = n$, and
\item  if $k>1$, then $\C$ is an $[mk,k,m]$ Hadamard rank metric code $H_1(q,m,k)$.
\end{enumerate}
\end{theorem}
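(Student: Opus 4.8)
The plan is to work entirely on the geometric side via the $q$-system. Let $X \subseteq \Fqm^k$ be the $[n,k]$ $q$-system spanned by the columns of a generator matrix $\GG$ of $\C$, so that $\dim_{\Fq} X = n$. By Lemma \ref{lem:2}, a codeword $\cc = \xx_\cc\GG$ has $\rank(\cc) = n - \dim_{\Fq}(X \cap H_\cc)$, and since $H_\cc = \mathrm{Ker}\,\psi_{\xx_\cc}$ depends only on the line $\<\xx_\cc\>_{\Fqm}$, as $\cc$ ranges over the nonzero codewords the hyperplane $H_\cc$ ranges over \emph{all} hyperplanes of $\Fqm^k$. Hence the constant-weight hypothesis is equivalent to the geometric statement that every hyperplane $H$ of $\Fqm^k$ satisfies $\dim_{\Fq}(X \cap H) = n-d$. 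I would first record that $X$ lies in no hyperplane: if $X \subseteq H_\xx$ for some $\xx \neq \0$, the corresponding codeword would be nonzero of rank $0$, which is impossible; equivalently, $X$ spans $\Fqm^k$ over $\Fqm$. When $k=1$ the only hyperplane of $\Fqm$ is $\{\0\}$, so every nonzero codeword has rank $n$ and case (a) follows at once, with $a_1,\dots,a_n$ the $\Fq$-basis $X$ (so $d=n\le m$).

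Now assume $k\ge 2$; the heart of the argument is to first pin down $d$. Fix any $\Fqm$-subspace $W$ of codimension $2$ and let $\pi\colon \Fqm^k \to \Fqm^k/W \cong \Fqm^2$ be the quotient map, with $Y := \pi(X)$ and $D := \dim_{\Fq} Y = n - \dim_{\Fq}(X\cap W)$. The $q^m+1$ hyperplanes $H \supseteq W$ project onto the $q^m+1$ distinct $\Fqm$-lines $\ell = \pi(H)$ of $\Fqm^2$, and a short check gives $Y \cap \ell = \pi(X\cap H)$, whence $\dim_{\Fq}(Y\cap \ell) = (n-d) - \dim_{\Fq}(X\cap W) = D-d$ is the \emph{same} for every line $\ell$. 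Since these lines partition $Y\setminus\{\0\}$, counting nonzero points yields
\[
q^{D}-1 = (q^m+1)\,(q^{\,D-d}-1).
\]
Here $D\ge 1$ because $X$ spans $\Fqm^k$ (so $Y\neq\{\0\}$), which rules out $D=d$; comparing the exact power of $q$ dividing each side (both $q^{\,D-d}-1$ and $q^m-1$ are coprime to $q$) then forces $D-d=m$, hence $D=2m$ and $d=m$, i.e. $Y=\Fqm^2$. I expect this valuation step to be the main obstacle, since it is the only place where genuine arithmetic rigidity, rather than bookkeeping, enters.

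Finally I would recover $n$ by a global incidence count. Summing $|X\cap H| = q^{\,n-d}$ over all $N=(q^{mk}-1)/(q^m-1)$ hyperplanes and comparing with $\sum_{e\in X}\#\{H : e\in H\}$, where a fixed nonzero $e$ lies in exactly $N'=(q^{m(k-1)}-1)/(q^m-1)$ hyperplanes, gives
\[
(q^{mk}-1)(q^{\,n-d}-1) = (q^{n}-1)(q^{m(k-1)}-1).
\]
Substituting $d=m$ and simplifying (the identity factors as $(q^m-1)(q^{mk}-q^{n})=0$) forces $n=mk$. Thus $\dim_{\Fq} X = mk = \dim_{\Fq}\Fqm^k$, so $X=\Fqm^k$; equivalently the columns of $\GG$ form an $\Fq$-basis of $\Fqm^k$, which is precisely the Hadamard rank metric code $H_1(q,m,k)$ up to equivalence. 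This settles case (b) and completes the proof.
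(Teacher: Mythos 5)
The paper does not actually prove this statement --- it is imported wholesale from \cite[Theorem 12]{Ran20} --- so there is no in-paper proof to compare against and your argument has to stand on its own. It does: it is correct, and it follows the same geometric philosophy as the cited source (reading rank weights off hyperplane sections of the $q$-system $X$ via Lemma \ref{lem:2}), but the way you extract the two numerical constraints is a clean, self-contained route. First, projecting $X$ modulo a codimension-$2$ subspace $W$ reduces the problem to $\Fqm^2$, where the $q^m+1$ lines partition $\pi(X)\setminus\{\0\}$ into pieces of constant size $q^{D-d}-1$; the identity $q^D-1=(q^m+1)(q^{D-d}-1)$ then forces $D-d=m$, hence $d=m$. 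I verified the valuation step: writing $e=D-d\geq 1$, the identity becomes $q^{D}=q^{m+e}+q^{e}-q^{m}$, whose $q$-adic valuation is $\min(e,m)$ when $e\neq m$, and dividing out leaves a factor coprime to $q$ and different from $1$, a contradiction; only $e=m$ survives. Second, the global point--hyperplane incidence count $(q^{mk}-1)(q^{n-d}-1)=(q^n-1)(q^{m(k-1)}-1)$ with $d=m$ factors as claimed and forces $n=mk$, whence $X=\Fqm^k$ and $\C$ is $H_1(q,m,k)$ by the paper's definition. The boundary cases are all handled ($X$ lies in no hyperplane, so $D\geq 1$ and $D>d$; the $k=1$ case is immediate). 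Two cosmetic slips only: the quantity you call $q^m-1$ in the valuation remark is really $q^m+1$ (harmless, since both are coprime to $q$), and in case (a) you mean that $a_1,\dots,a_n$ form an $\Fq$-basis \emph{of} $X$.
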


Theorem \ref{thm:1} gives a complete description of non-degenerate (and thus degenerate) $[n,k,d]$ rank metric codes whose all non-zero codewords have the same rank, that is equal to $d$. The next logical question is to consider codes where there are two possible ranks for the non-zero codewords.

\begin{definition}[Two-weight rank metric codes]\label{two-weight}
An $[n,k,d]$ rank metric code $\C$ over $\Fqm/\Fq$ is called a two-weight rank metric code if for any $\cc\in \C$, $\rank(\cc)\in \{0,d,d'\}$ where $0 < d < d'$.
\end{definition}

It is shown in \cite[Proposition 3.11]{ABNR21} that, for an $[n,k,d]$ rank metric code over $\Fqm/\Fq$, there exists a codeword with rank equal to $\min\{n,m\}$. Therefore, for a two-weight rank metric code, there are two possible weight distributions: $\{0,d,m\}$ or $\{0,d,n\}$. Now we consider the particular class of two-weight rank metric codes that have a weight distribution of the latter type i.e. $\{0, d, n \}$.

\begin{definition}[Antipodal two-weight (ATW) rank metric codes]
An $[n,k,d]$ linear rank metric code over $\Fqm/\Fq$ with $d <n$ is called an \emph{antipodal two-weight} or ATW rank metric code if any nonzero codeword has rank either $d$ or $n$ and there is at least one codeword of full rank $n$. 
\end{definition}

The most straightforward examples of antipodal two-weight rank metric codes are given by MRD codes of dimension $2$. To see this, let $\C$ be an $[n,2,n-1]$ MRD codes over $\Fqm/\Fq$, where $n\leq m$. Then, for any $\cc\in \C$ such that $\cc\neq \0$, it is clear that $n-1\leq \rank (\cc)\leq n$. Hence, the two dimensional MRD codes are $[n,2,n-1]$ ATW rank metric codes. As, we explained earlier, there are many constructions of MRD codes \cite{Gab85,GK05,PR22,She16} and those already provide inequivalent classes of ATW rank metric codes of dimension $2$.

\subsection{\texorpdfstring{$t$}{t}-Spreads}\label{sub:2.2}

Spreads are widely studied objects in finite geometry \cite{BJJ01,NVB07,Hir79}. There are several works which connect them to coding theory. For example, spreads when considered as subspace codes are optimal subspace codes \cite{MGR08} i.e. they are the largest given their parameters. In this work, we give a connection between spreads and ATW rank metric codes. We start with a brief review of the basic notions and relevant results related to spreads.

\begin{definition}[$t$-Spreads]\label{def:spread}
A $t$-spread is a pair $(V,\S)$ where $V$ is a vector space of dimension $n$ over $\Fq$ and $\S$ is a set of subspaces of $V$ of dimension $t$ such that $\bigcup_{S\in \S} S=V$ and for all $S_1\neq S_2\in \S$, $S_1\cap S_2=\{\0\}$.
If the ambient space $V$ is clear from the context, we simply write $\S$ to denote the $t$-spread. 
\end{definition}

 From now onwards, whenever we write ``spread'', we mean that $(V,\S)$ is a $t$-spread where $t=(\dim_{\Fq} V)/2$. It is well-known \cite{Seg64} that an $\Fq$-space $V$ of dimension $n$ has a $t$-spread if and only if $t$ divides $n$. Moreover, $|\Sigma| = \frac{q^n-1}{q^t-1}$.

Two $t$-spreads $\Sigma_1,~ \Sigma_2$ of $V$ are called $\emph{equivalent}$ if there exists a collineation $\alpha$ of $\Gamma L (V)$ such that $\Sigma_1^{\alpha} = \Sigma_2$.

Suppose $n=lt$. Any $r$-dimensional $\Fqt$-subspace of $\Fqt^l$ is an $rt$-dimensional $\Fq$-subspace of $\Fqt^l$. Now, let $\H$ be the set of all one dimensional subspaces of $\Fqt^l$ and thus $\#\H = \frac{q^n-1}{q^t-1}$. These $1$-dimensional $\Fqt$-subspaces are considered as $t$-dimensional $\Fq$-subspaces of $\Fqt^l\simeq \Fq^{tl}=\Fq^n$ and they intersect trivially. Hence, up to isomorphism, $\H$ is a set of $\frac{q^n-1}{q^t-1}$ non-intersecting $t$-dimensional subspaces of $\Fq^{n}$. This gives the first example of $t$-spreads.

\begin{definition}[Desarguesian $t$-spread]\label{def:desarguesian}
Let $\mathcal{D}_{l,t,q}$ be the $t$-spread of $\Fqt^{l}$ consisting of the $t$-dimensional subspaces of $\Fqt^{l}$ defined by the $1$-dimensional $\Fqt$-subspaces of $\Fqt^l$. A $t$-spread $\Sigma$ is called Desarguesian if $\Sigma$ is equivalent to $\mathcal{D}_{l,t,q}$ for some $l, ~t$ and $q$.
\end{definition}

\begin{definition}\label{def:6}
Let $V$ be an $\Fq$-vector space. Let $(V,\S)$ be a $t$-spread in $V$. For a subspace $W$ of $V$, the projection of $\S$ onto $W$, denoted as $\S_{W}$, is defined as follows
\[
\S_{W}=\{ S\cap W\colon S\in \S\} \backslash \{\{\0\}\}.
\]
The pair $(W,\S_{W})$ is called a subsystem of $(V,\S)$. 
\end{definition}

\begin{remark}
When the intersection is equal to the zero space $\{\0\}$, we remove it from the subsystem. For any arbitrary subspace $W\subset V$, the projection $\S_W$ is not necessarily a $t'$-spread, because the elements of $\S_W$ may not have the same dimension. 
\end{remark}

\begin{definition}\label{def:subspread}
Let $(V,\S)$ be a $t$-spread in $V$ and let $(W,\S_{W})$ be a subsystem of $(V,\S)$. If $(W,\S_{W})$ is itself $t'$-spread for some $t'\leq t$, then it is called a $t'$-subspread induced by $(V,\S)$ on $W$. 
If a $t'$-subspread is a spread, then we simply call it a subspread.
\end{definition}

Thus the notion of $t$-subspreads generalizes the notion of subspreads of \cite[Definition 4.4.1]{BJJ01} to arbitrary $t$.
Later in Theorem \ref{pro:spread}, we will show how $t$-subspreads are related to the main subject of this paper i.e. antipodal two-weight rank metric codes.

\section{Antipodal two-weight rank metric codes and \texorpdfstring{$t$}{t}-spreads}\label{sec:3}
In this section, we discuss properties of the \emph{antipodal} two-weight rank metric codes and $t$-spreads and establish a relation between these two objects.

\subsection{Antipodal two-weight (ATW) rank metric codes}\label{sub:3.1}
For an $[n,k,d]$ ATW rank metric code over $\Fqm/\Fq$, we must have $n\leq m$ because it has a codeword of full rank $n$. We first give a simple description of the form of the generator matrix of an ATW rank metric code. As we progress, we will refine this form.

\begin{lemma}\label{lem:3}
Let $\C$ be a non-degenerate $[n,k,d]$ ATW rank metric code over $\Fqm/\Fq$. Then $\C$ is equivalent to a rank metric code with a generator matrix of the form
\[
\GG=\begin{pmatrix}
\cc_1 & \cc_2 \\
\AA & \0
\end{pmatrix}, \text{ where}
\]
\begin{enumerate}[(i)]
    \item\label{i} $\cc_1\in \Fqm^{n-r}$, $\cc_2\in \Fqm^{r}$ for some positive integer $r$ and $\rank(\cc_1|\cc_2)=n$.
    \item\label{iv} $\AA\in \Fqm^{(k-1)\times (n-r)}$ is a generator matrix of a non-degenerate constant weight $[n-r,k-1]$ rank metric code.
\end{enumerate}
\end{lemma}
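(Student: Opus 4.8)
The goal is to establish a normal form for the generator matrix of an ATW rank metric code $\C$. The overarching strategy is to isolate the full-rank codewords from the codewords of rank $d$ and to exploit the classification of constant weight codes (Theorem \ref{thm:1}) on an appropriate subcode. First I would recall that since $\C$ has a codeword of full rank $n$, it is natural to pick such a codeword, call it $\cc_0$ of $\rank(\cc_0)=n$, and place it as the first row of the generator matrix after a suitable change of basis. The subtlety is that we want to arrange the remaining rows so that they form a constant weight code, which forces us to understand which codewords have rank exactly $d$.

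\emph{The plan is} to proceed as follows. First I would choose a codeword $\cc_0$ of full rank $n$ to be the first row of $\GG$; this is possible by the definition of ATW codes, and we may scale/row-reduce freely since such operations preserve equivalence (they correspond to the invertible matrix $\SS$ acting on the left). Next, I would consider the quotient, or rather the set of codewords obtained by subtracting multiples of $\cc_0$: the key observation is that every nonzero codeword has rank either $d$ or $n$, and I would like to show that after removing the contribution of $\cc_0$, the remaining $(k-1)$-dimensional subcode can be taken to be constant weight of rank $d$. To do this I would examine the $\Fqm$-span of the lower $k-1$ rows and argue that a generic representative there has rank $d$ (not $n$), using that codewords of full rank, together with $\0$, do not form a subspace unless the code is itself constant weight.

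\emph{Identifying the block structure.} Once a constant-weight $(k-1)$-dimensional subcode $\C'$ is isolated as the lower rows, I would apply Theorem \ref{thm:1}: if $k-1 > 1$ then $\C'$ is (equivalent to) a Hadamard code $H_1(q,m,k-1)$, and if $k-1 = 1$ it is a single full-length constant-weight row. In either case the columns supporting $\C'$ span an $\Fq$-space of some dimension $n-r$, and by a column permutation (the matrix $\MM \in \Fq^{n\times n}$ in the equivalence) I can push the support of $\C'$ into the first $n-r$ coordinates, leaving the block $\AA$ in the lower-left and forcing zeros in the lower-right block. This yields item \eqref{iv}: $\AA$ is the generator matrix of a non-degenerate constant weight $[n-r,k-1]$ code. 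Splitting $\cc_0$ accordingly as $(\cc_1 \mid \cc_2)$ with $\cc_1 \in \Fqm^{n-r}$ and $\cc_2 \in \Fqm^{r}$, and noting $\rank(\cc_0)=n$ is preserved, gives item \eqref{i}; non-degeneracy of $\C$ guarantees $r$ is a genuine positive integer (the right block is nonempty, else $\C'$ would be non-degenerate of full length $n$ and adding $\cc_0$ could not keep the code two-weight).

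\emph{The main obstacle.} The hardest step is showing that the lower $(k-1)$-dimensional subcode can be chosen to be \emph{constant weight} rather than merely two-weight. A priori, subtracting $\cc_0$ from codewords could leave rank-$n$ codewords in the complementary subspace, so I must rule out that the rank-$d$ codewords fail to assemble into a subspace after the reduction. I would handle this by a careful dimension-counting or linear-algebra argument on the partition of $\C \setminus \{\0\}$ into the rank-$d$ set and the rank-$n$ set, showing that the rank-$d$ codewords together with $\0$ form an $\Fqm$-subspace of dimension exactly $k-1$; the geometric interpretation of rank via hyperplane sections (Lemma \ref{lem:2}) should make this transparent, since a rank-$d$ codeword corresponds to a hyperplane meeting the $q$-system $X$ in dimension $n-d$, and these large intersections will be seen to be governed by a fixed subspace.
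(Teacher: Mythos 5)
Your overall plan coincides with the paper's in outline: place a full-rank codeword in the first row, isolate a $(k-1)$-dimensional constant-weight subcode below it, and use an invertible matrix over $\Fq$ to compress the degenerate coordinates into a zero block. However, the step you yourself single out as ``the main obstacle'' --- producing the constant-weight subcode --- is precisely the step you do not carry out, and the route you sketch for it rests on a false statement. The set of rank-$d$ codewords together with $\0$ is \emph{not} in general an $\Fqm$-subspace of dimension $k-1$: already for a $2$-dimensional MRD code the rank-$(n-1)$ codewords are distributed over $\frac{q^n-1}{q-1}$ distinct lines of $\C$, and more generally an ATW code has $(q^m-1)\frac{q^n-1}{q^{n-d}-1}$ codewords of rank $d$, strictly more than the $q^{m(k-1)}-1$ nonzero vectors of a $(k-1)$-dimensional subspace. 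What the lemma actually needs is only the \emph{existence} of one $(k-1)$-dimensional subspace all of whose nonzero elements have rank $d$, and this requires no counting at all: pick a coordinate $j$ at which the chosen full-rank codeword is nonzero and row-reduce so that the remaining $k-1$ rows have a $0$ in position $j$ (equivalently, take the subcode of codewords vanishing at coordinate $j$, which has dimension exactly $k-1$). Every nonzero codeword of that subcode has rank at most $n-1<n$, hence rank exactly $d$, since $d$ and $n$ are the only weights that occur. This one-line observation is the entire content of the paper's argument for the key step, and without it (or a correct substitute) your proof is incomplete.

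Two smaller points. A column \emph{permutation} does not in general bring a degenerate code to the form $[\AA\mid\0]$; one needs a general invertible $\MM\in\Fq^{n\times n}$ (the degenerate code generated by $(1,1)$ requires a shear, not a permutation) --- you do mention $\MM$, so this is only an imprecision of wording. Also, $r\geq 1$ falls out of the construction above for free, since the $j$-th column of the lower block is already zero and hence the subcode generated by those rows is degenerate; the justification you offer for $r\geq 1$ is not needed and, as phrased, is circular.
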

\begin{proof}
 Since $\C$ is antipodal, there is a codeword $\cc$ of full rank $n$. After row reductions, we can have a generator matrix of the following form
\[
\GG'=\begin{pmatrix}
\cc' & \alpha \\
\AA' & \0_{(k-1)\times 1}
\end{pmatrix},
\]
where $\cc'\in \Fqm^{n-1}, \; \alpha \in \Fqm^{\times}$ such that $\cc = (\cc'|\alpha)$ and $\AA'\in \Fqm^{(k-1)\times (n-1)}$. Since the subcode $\C'\subset \C$ generated by $\left[\AA'|\0_{(k-1)\times 1}\right]$ can't have a codeword of rank $n$, then it must be a constant rank weight code with minimum distance $d$. In case $\AA'$ generates a degenerate rank metric code, then we can find some invertible matrix $\MM\in \Fq^{n\times n}$ such that 
\[
\left[\AA'|\0_{(k-1)\times 1}\right]\MM = \left[\AA|\0_{(k-1)\times r}\right],
\]
where $1\leq r\leq n-1$ and $\AA\in  \Fqm^{(k-1)\times (n-r)}$  is a generator matrix of a non-degenerate constant weight rank metric code. Therefore, $\GG:=\GG'\MM$ has the desired form and the codeword $(\cc_1|\cc_2) = (\cc'|\alpha)\MM$ has rank $n$ since $M$ is invertible.
\qed\end{proof}

It is clear that any two-weight rank metric code cannot be of dimension $1$. Such codes are constant weight rank metric codes. 
Now, following Theorem \ref{thm:1}, we can say that the matrix $\AA$ in the decomposition of $G$ in Lemma \ref{lem:3} generates either a code of dimension $1$ or a Hadamard rank metric code. And the later case leads to the following result.

\begin{theorem}\label{thm:2}
There is no $[n,k,d]$ ATW rank metric code over $\Fqm/\Fq$ of dimension $k\geq 3$.
\end{theorem}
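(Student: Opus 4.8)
The plan is to argue by contradiction, feeding the decomposition of Lemma \ref{lem:3} into the classification of constant weight codes in Theorem \ref{thm:1}. Suppose that $\C$ is a non-degenerate $[n,k,d]$ ATW rank metric code over $\Fqm/\Fq$ with $k\geq 3$. By Lemma \ref{lem:3} we may assume its generator matrix has the block form
\[
\GG=\begin{pmatrix}
\cc_1 & \cc_2 \\
\AA & \0
\end{pmatrix},
\]
where $\AA\in\Fqm^{(k-1)\times(n-r)}$ generates a non-degenerate constant weight $[n-r,k-1]$ rank metric code and $\rank(\cc_1|\cc_2)=n$.

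Since $k\geq 3$, the subcode generated by $\AA$ has dimension $k-1\geq 2$, so it falls under case (b) of Theorem \ref{thm:1}. Hence this constant weight code is a Hadamard rank metric code $H_1(q,m,k-1)$, which by definition has length $m(k-1)$. Comparing lengths, I would read off the key numerical identity $n-r=m(k-1)$; in particular $n\geq n-r=m(k-1)$.

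Next I would invoke the length restriction for ATW codes. Because $\C$ is antipodal it contains a codeword of full rank $n$, and the entries of such a codeword span an $n$-dimensional $\Fq$-subspace of $\Fqm$; therefore $n\leq m$. Combining this with the previous inequality gives $m(k-1)\leq n\leq m$, hence $k-1\leq 1$, i.e. $k\leq 2$, which contradicts $k\geq 3$. This contradiction shows that no such code exists.

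The argument is short, so there is no single hard computational step; the one place that needs care is correctly identifying the block $\AA$ as generating a code of dimension exactly $k-1\geq 2$ and matching it to case (b) of Theorem \ref{thm:1}, so that its length is forced to be the Hadamard length $m(k-1)$ rather than the $k-1=1$ case. Everything else reduces to the direct comparison of the two inequalities $m(k-1)\leq n$ and $n\leq m$.
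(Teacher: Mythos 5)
Your proof is correct and follows the paper's argument exactly in structure: reduce via Lemma \ref{lem:3} to a constant weight subcode of dimension $k-1\geq 2$, identify it as a Hadamard code $H_1(q,m,k-1)$ by Theorem \ref{thm:1}, and contradict the bound $n\leq m$ forced by the existence of a full-rank codeword. The only (harmless) difference is the parameter you read off the Hadamard code: you use its length to get $m(k-1)=n-r\leq n\leq m$, while the paper uses its minimum distance to get $m=d<n\leq m$; both yield the contradiction.
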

\begin{proof}
Let $\C$ be an $[n,k,d]$ ATW rank metric code with of dimension $k\geq 3$ and generator matrix $\GG$. 
By Lemma \ref{lem:3}, we can assume that
\[
\GG=\begin{pmatrix}
\cc_1 & \cc_2 \\
\AA & \0
\end{pmatrix},
\]
where $\AA$ is a generator matrix of a non-degenerate constant weight rank metric code of dimension $k-1 \ge 2$ and minimum distance $d$. Following Theorem \ref{thm:1}, $\AA$ is a generator matrix of a Hadamard code $H_1(q,m,k-1)$. But that implies $m=d<n \leq m$ which is a contradiction.  
\qed\end{proof}

So any ATW rank metric codes are only of dimension $2$. The next lemma gives a restriction on the distance of the code.

\begin{lemma}\label{lem:4}
For any $[n,2,d]$ ATW rank metric code, the minimum distance $d$ is at least $n/2$. 
\end{lemma}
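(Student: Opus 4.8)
The plan is to use the refined generator matrix from Lemma~\ref{lem:3} together with Theorem~\ref{thm:1}. By Theorem~\ref{thm:2} we know $k=2$, so in the decomposition
\[
\GG=\begin{pmatrix}\cc_1 & \cc_2\\ \AA & \0\end{pmatrix}
\]
the block $\AA$ is a generator matrix of a non-degenerate constant weight $[n-r,1]$ rank metric code. By case (a) of Theorem~\ref{thm:1}, this forces $\AA=(a_1,\dots,a_{n-r})$ to be a single vector of full rank $n-r$, i.e. the entries $a_1,\dots,a_{n-r}$ are $\Fq$-linearly independent in $\Fqm$, and the constant weight of the subcode it generates is $d=n-r$. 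Hence the minimum distance of $\C$ equals $d=n-r$, and everything reduces to bounding $r$ from above, equivalently showing $r\le n/2$.

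Next I would examine the weights of the codewords lying in the coset generated by the top row. A general codeword is $\cc=\lambda(\cc_1\mid\cc_2)+\mu(\AA\mid\0)$ with $\lambda,\mu\in\Fqm$. When $\lambda=0$ and $\mu\ne 0$ we recover the full-rank-$(n-r)$ codewords of weight $d$. When $\lambda\ne 0$, the second block is $\lambda\cc_2$, which has rank equal to $\rank(\cc_2)$ over $\Fq$ (scaling by a nonzero $\Fqm$-element does not change $\Fq$-rank of the whole codeword but I must argue about the coordinates carefully). The key observation is that the last $r$ coordinates of any codeword with $\lambda\ne 0$ are $\lambda\cc_2$, whose $\Fq$-span has dimension $\rank(\cc_2)$, while the first $n-r$ coordinates are $\lambda\cc_1+\mu\AA$. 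Since $\C$ is antipodal two-weight, every such codeword has rank either $d=n-r$ or $n$.

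The heart of the argument is a counting/dimension estimate forcing $r\le n-r$. I would fix $\lambda=1$ and vary $\mu$ over $\Fqm$: this gives $q^m$ codewords $\cc^{(\mu)}=(\cc_1+\mu\AA\mid\cc_2)$, all sharing the same last block $\cc_2$. If infinitely many (in fact if at least one) of these has rank $d=n-r<n$, then the $\Fq$-span of its coordinates has dimension exactly $n-r$; but this span must contain the $\Fq$-span of the $r$ coordinates of $\cc_2$. More usefully, consider the subcode structure: among the $\cc^{(\mu)}$ those of rank $d$ form a large affine family, and I would show that if $r>n-r$ then the last block $\cc_2$ alone already contributes more than $n-r$ independent coordinates in combination with the constant-weight structure of $\AA$ in the first block, forcing rank strictly between $d$ and $n$ for some choice of $\mu$ — contradicting the two-weight property. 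Concretely, using $\rank(\cc^{(\mu)})=(n-r)+r-\dim_{\Fq}\bigl(\langle(\cc_1+\mu\AA)_i\rangle\cap\langle(\cc_2)_j\rangle\bigr)$-type inclusion-exclusion together with $\rank(\AA)=n-r$, a dimension count shows the intersection term cannot be large enough to keep the rank down to $n-r$ unless $r\le n-r$.

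The main obstacle I anticipate is the last step: ruling out intermediate ranks requires controlling, as $\mu$ ranges over $\Fqm$, the dimension of the intersection between the $\Fq$-span of the first block $\langle\cc_1+\mu\AA\rangle_{\Fq}$ and the fixed span $\langle\cc_2\rangle_{\Fq}$. Because $\AA$ has full rank $n-r$, varying $\mu$ moves the first span generically, and I expect that a pigeonhole or genericity argument (or equivalently using Lemma~\ref{lem:2} and counting hyperplane sections of the associated $q$-system) shows that if $r>n-r$ one can always select $\mu$ producing a forbidden weight. I would phrase this cleanly via the $q$-system picture of Lemma~\ref{lem:2}: the constant-weight subcode corresponds to a Desarguesian-type component of dimension $n-r$, and antipodality constrains how the extra $r$ coordinates can attach, yielding $r\le n-r$, i.e. $d=n-r\ge n/2$.
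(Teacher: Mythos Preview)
Your setup is correct: via Lemma~\ref{lem:3} and Theorem~\ref{thm:1}(a) the generator matrix takes the form
\[
\GG=\begin{pmatrix}\cc_1 & \cc_2\\ \AA & \0\end{pmatrix}
\]
with $\AA$ a single vector of full rank $n-r$ and $d=n-r$, so the goal is indeed $r\le n-r$. However, from that point on your argument is not a proof but a programme. You write ``I would show that \dots'', ``I expect that a pigeonhole or genericity argument \dots'', and you explicitly flag the key step as ``the main obstacle I anticipate''. The inclusion--exclusion formula you propose,
$\rank(\cc^{(\mu)})=(n-r)+r-\dim_{\Fq}\bigl(\langle\cc_1+\mu\AA\rangle\cap\langle\cc_2\rangle\bigr)$,
is also not correct as stated: the first summand should be $\rank(\cc_1+\mu\AA)$, which you have no control over, not $n-r$. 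So the dimension count you hope to run never actually gets off the ground.

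The paper's proof avoids all of this with a single elementary move you overlooked. Since $\AA=(a_1,\dots,a_{n-r})$ has full rank, in particular $a_1\neq 0$; choose the \emph{one} value $\mu=-(\cc_1)_1/a_1$ so that the first coordinate of $\cc_1+\mu\AA$ vanishes. The resulting codeword $(\cc_1'\mid\cc_2)$ has a zero entry, hence cannot have rank $n$, hence has rank exactly $d$. But $\cc_2$ is a sub-vector of this codeword, so $\rank(\cc_2)\le d$. On the other hand $\rank(\cc_1\mid\cc_2)=n$ forces the entries of $\cc_2$ to be $\Fq$-independent, i.e.\ $\rank(\cc_2)=n-d$. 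Combining, $n-d\le d$, which is the claim. No pigeonhole, no genericity, no hyperplane counting is needed; one well-chosen row operation suffices.
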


\begin{proof}
Let $\C$ be an $[n,2,d]$ ATW rank metric code. Following Lemma \ref{lem:3}, $\C$ is equivalent to a code $\C'$ with a generator matrix $
\GG=\begin{pmatrix}
\cc_1 & \cc_2 \\
\cc_3 & \0
\end{pmatrix}$, 
where $\cc_1,\cc_3\in \Fqm^d$, $\cc_2\in \Fqm^{n-d}$, $\rank(\cc_3)=d$ and $\rank(\cc_1|\cc_2)=n$. 
After row reduction, $\GG$ can be transformed into
$
\GG'=\begin{pmatrix}
\cc_1' & \cc_2 \\
\cc_3 & \0
\end{pmatrix}
$,
where the first entry of $\cc_1'$ is zero and the two rows of the new generator matrix $\GG'$ have rank $d$. The fact that $\rank(\cc_1 | \cc_2) = n$ implies that $\cc_2$ has full rank $n-d$. Since $\rank(\cc_2)\leq\rank(\cc_1' | \cc_2)= d$, then $d\geq n-d$ i.e. $d \geq n/2$. 
\qed\end{proof}

The following example gives an ATW rank metric code with $d=n/2$. Later we show that any codes of the same parameters must be equivalent to this one.
\begin{example}\label{exa:1}
  Let $\Fqd=\left<\alpha_1,\dots,\alpha_d\right>_{\Fq}$ be an extension of degree $d$ over $\Fq$. Then, the matrix 
\[
\GG=\begin{pmatrix}
0 & 0  & \cdots & 0 & \alpha_1 & \alpha_2 & \cdots & \alpha_d \\
\alpha_1 & \alpha_2 & \cdots & \alpha_d & 0 & 0  & \cdots & 0
\end{pmatrix}
\]
generates a non-degenerate ATW $[n=2d,2,d]$ rank metric code $\C$ over $\mathbb{F}/\Fq$, where $\mathbb{F}$ is any proper extension of $\Fqd$. For the proof, we know that all codewords of $\C$ are of the form $(a_1,a_2)\GG$, where $a_1, a_2 \in \F$. Now, for any such $(a_1,a_2) \in \F^2$, either $a_2 a_1^{-1}\in \Fq$ or $a_2 a_1^{-1}\notin \Fq$. For the first case, the codeword is of rank $d$, whereas the second case produces a codeword of rank $n$. Note that we cannot take $\mathbb{F} = \Fqd$, because then, there will be no codeword of full rank in $\C$.
\end{example}

In fact this construction can be modified to provide constructions of two-weight rank metric codes which are not necessarily antipodal. As we mentioned in a paragraph after Definition \ref{two-weight}, for non-antipodal two-weight rank metric codes, the weight distribution is of the form $\{0,d,m\}$ where $m<n$. The following example gives a construction of non-antipodal two-weight rank metric codes.

\begin{example}\label{exa:2}
Let $d,n$ be positive integers such that $2d<n$. We have a sequence of finite fields $\Fq\subset \F_{q^d} \subset \F_{q^{2d}}$. Let $\{\alpha_1, \dots, \alpha_d\}$ be an $\Fq$-basis of $\F_{q^{d}}$. Then consider the rank metric code $\C$ over $\F_{q^{2d}}/\Fq$, generated by the following $k\times dk$ matrix.
  \[
\GG= \left[ 
\begin{array}{*{20}c}
\alpha_1 & \hdots & \alpha_d & 0  & \hdots & 0 & \hdots & \hdots & \hdots & 0 & \cdots & 0 \\
0 & \hdots & 0 & \alpha_1  & \hdots & \alpha_d & \hdots & \hdots & \hdots & 0 & \hdots & 0 \\
\vdots & \vdots & \vdots & \vdots  & \vdots & \vdots & \ddots & \ddots & \ddots & \vdots & \vdots & \vdots \\
0 & \hdots & 0 & 0  & \hdots & 0 & \hdots & \hdots & \hdots & \alpha_1 & \hdots & \alpha_d 
\end{array}
\right].
\]
The $i$-th row of the matrix $\GG$ is chosen in such a way that the i-th block (with $d$ entries) consists of $(\alpha_1, \dots, \alpha_d)$ and the other blocks are made of zero vectors. Therefore, $\C$ has length $n=kd$ and dimension $k >2$. Now, let us check the weights of the codewords of $\C$. Suppose that $\cc=\xx\GG$ where $\xx=(x_1,\dots,x_k)\in \F_{q^{2d}}^k$. There are two possibilities:
\begin{enumerate}[(i)]
    \item $\<x_1,\dots,x_k\>_{\F_{q^d}} = \<\beta\>_{\F_{q^d}}$ for some $\beta\in \F_{q^{2d}}$.
    \item $\dim_{\F_{q^{d}}}\<x_1,\dots,x_k\>_{\F_{q^d}}>1$.
\end{enumerate}
In the first case, $\rank\, \cc  = d$ if $\beta\neq 0$ and $\rank\, \cc = 0$ if $\beta = 0$. For the second case, without loss of generality, suppose that $\dim_{\F_{q^{d}}}\<x_1,x_2\>_{\F_{q^d}}=2$. Therefore, $\rank(x_1\alpha_1, \dots, x_1\alpha_d,x_2\alpha_1, \dots, x_2\alpha_d)=2d$. Thus
\[
2d=\rank(x_1\alpha_1, \dots, x_1\alpha_d,x_2\alpha_1, \dots, x_2\alpha_d)\leq \rank\, \cc \leq 2d. 
\]
The last inequality comes from the fact that the code is over $\F_{q^{2d}}$ so that the rank of any codewords is at most $2d$. Therefore $\rank\, \cc=2d$. 

Therefore, the code $\C$ generated by $\GG$ is a $[kd,k,d]$ non-antipodal two-weight rank metric code over $\F_{q^{2d}}/\Fq$ with weight distribution $\{0,d,2d\}$.
\end{example}

\subsection{Some properties of \texorpdfstring{$t$}{t}-spreads}\label{sub:3.2}
Here we prove some properties of $t$-spreads that will be useful in latter sections.
\begin{lemma}
Let $(V, \S)$ be a $t$-spread of $V$ where $V$ is a vector space over $\Fq$ of dimension $n=tl$. For any integer $1<r<l$, if $\{S_1,S_2,\dots,S_r\}$ are distinct subspaces in $\S$ such that $\dim_{\Fq} S_1+\dots+S_r=tr$, then there is an element $S_{r+1}$ in $\S$ such that $\dim_{\Fq} S_1+\dots+S_{r+1}=t(r+1)$.
\end{lemma}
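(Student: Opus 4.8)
The plan is to reduce the extension problem to finding a single spread element disjoint from the sum $U := S_1 + \cdots + S_r$, and then to guarantee the existence of such an element by a counting argument. First I would set $U = S_1 + \cdots + S_r$; by hypothesis this sum is direct, so $\dim_{\Fq} U = tr$, and since $r < l$ we have $tr < tl = n$, so $U$ is a proper subspace of $V$. The key observation is that if some $S \in \S$ satisfies $S \cap U = \{\0\}$, then $\dim_{\Fq}(U + S) = tr + t = t(r+1)$, and taking $S_{r+1} := S$ finishes the proof (such an $S$ is automatically distinct from $S_1, \dots, S_r$, which all lie in $U$). Thus the whole problem becomes: show that not every member of the spread meets $U$ nontrivially.

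To prove this, I would argue by contradiction, assuming every $S \in \S$ satisfies $S \cap U \neq \{\0\}$. The two defining properties of a spread, namely that $\bigcup_{S \in \S} S = V$ and that distinct members meet only in $\{\0\}$, say exactly that the nonzero vectors of $V$ are partitioned according to which member contains them. Restricting this partition to $U$, each nonzero vector of $U$ lies in exactly one member $S$, and it then lies in $(S \cap U) \setminus \{\0\}$. Under the contradiction hypothesis every member receives at least $|S \cap U| - 1 \ge q - 1$ of the $q^{tr} - 1$ nonzero vectors of $U$, whence
\[
|\S|\,(q-1) \le q^{tr} - 1.
\]

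The final step is to contradict this inequality using the known cardinality $|\S| = \frac{q^n-1}{q^t-1} = 1 + q^t + \cdots + q^{t(l-1)}$. On one hand the displayed inequality forces $|\S| \le \frac{q^{tr}-1}{q-1} < q^{tr}$; on the other hand $|\S| > q^{t(l-1)} \ge q^{tr}$, the last step because $r \le l-1$. These two bounds are incompatible, giving the desired contradiction and hence a spread element $S_{r+1}$ with $S_{r+1} \cap U = \{\0\}$, which is precisely what we wanted.

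I expect the only delicate point to be the bookkeeping in the counting inequality, specifically being careful that the nonzero vectors of $U$ are genuinely \emph{partitioned} (not merely covered) by the spread members, since this is where both spread axioms are used, and then checking that the crude bounds $\frac{q^{tr}-1}{q-1} < q^{tr}$ and $|\S| > q^{t(l-1)}$ are strong enough to collide. Everything else is routine; the geometric content is entirely concentrated in the reduction to finding a spread member disjoint from $U$.
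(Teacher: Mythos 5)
Your proof is correct and takes essentially the same route as the paper: assume every spread member meets $U = S_1+\cdots+S_r$ nontrivially and derive a contradiction by counting the nonzero vectors of $U$ against the spread members that must each claim some of them. Your count is slightly sharper (each member contributes at least $q-1$ nonzero vectors of $U$, giving $|\S|(q-1)\le q^{tr}-1$, versus the paper's one distinct vector per member outside $\{S_1,\dots,S_r\}$), which makes your closing comparison with $|\S|=\frac{q^n-1}{q^t-1}>q^{t(l-1)}\ge q^{tr}$ cleaner than the paper's chain of manipulations, but the underlying argument is the same.
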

\begin{proof}
  By contradiction, suppose that for any $S_{j}\neq S_i$, $i=1,\dots,r$, $\dim_{\Fq} S_1+\dots+S_{r}+S_j<t(r+1)$. Hence $S_{j}\cap (S_1+\dots+S_r)$ contains a non-zero vector.
  Since the elements of a $t$-spread pairwise intersect trivially, each space in $\Sigma\backslash \{S_1,\dots,S_r\}$ contains a distinct non-zero vector in $S_1+\dots + S_r$. Therefore, $\# (S_1+\dots+S_r)\backslash \{\0\} \geq \frac{q^n -1}{q^t-1}-r$. 
Hence,
\begin{align*}
    & q^{tr}-1 \geq \frac{q^n-1}{q^t-1} - r\\
    \Rightarrow & q^{t(r+1)}-q^{tr}  \geq q^n-1 - (r-1)(q^t-1)\\
    \Rightarrow & q^{n}-q^{tr} \geq q^n-1 - (r-1)(q^t-1) ~(\text{As } q^n\geq q^{t(r+1)}, \text{for } r+1\leq l)\\
    \Rightarrow & q^{tr}-1 \leq (r-1)(q^t-1) < (r-1)q^t\\
    \Rightarrow & q^{tr}-1 < rq^t-q^t +(q^t-1)\\
    \Rightarrow & q^{tr}-1 < rq^t-1.
\end{align*}
    This implies that  $q^{tr} < rq^t$. Since $t\geq 1$ and $r\geq 2$, this leads to a contradiction as $q^{tr} \geq rq^t$ and hence the Lemma is proved.
\qed\end{proof}

\begin{theorem}\label{thm:split}
Suppose that $V$ is a vector space of dimension $n$ over $\Fq$ and let $n=tl$. Let $\S$ be a $t$-spread of $V$. Then there are $S_1,\dots,S_l$ in $\S$ such that $V=S_1\oplus S_2\oplus \dots\oplus S_l$.
\end{theorem}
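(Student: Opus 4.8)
The plan is to construct the decomposition greedily, adjoining one spread element at a time, with the preceding Lemma serving as the engine for each extension step. First I would dispose of the trivial case $l=1$: here $V$ has dimension $t$, the spread is $\S = \{V\}$, and $V = S_1$ needs no argument. So assume $l \ge 2$. To start the induction I would pick any two distinct elements $S_1, S_2 \in \S$; these exist because $|\S| = \frac{q^n-1}{q^t-1} \ge q^t + 1 > 1$. By the defining property of a $t$-spread, distinct elements meet trivially, so $S_1 \cap S_2 = \{\0\}$ and hence $\dim_{\Fq}(S_1 + S_2) = 2t$. This gives the base case: a direct sum of two spread elements of the correct dimension.

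The inductive step is precisely the content of the preceding Lemma. Suppose that for some $r$ with $2 \le r \le l-1$ I have found distinct $S_1,\dots,S_r \in \S$ with $\dim_{\Fq}(S_1 + \cdots + S_r) = tr$, i.e.\ the sum is direct. Since $1 < r < l$, the Lemma applies and furnishes an $S_{r+1} \in \S$ with $\dim_{\Fq}(S_1 + \cdots + S_{r+1}) = t(r+1)$. Comparing dimensions forces $S_{r+1} \cap (S_1 + \cdots + S_r) = \{\0\}$ and $S_{r+1}$ distinct from the previous ones, so the enlarged sum $S_1 + \cdots + S_{r+1}$ is again direct. Iterating this from $r = 2$ up to $r = l-1$ produces distinct $S_1,\dots,S_l \in \S$ with $\dim_{\Fq}(S_1 \oplus \cdots \oplus S_l) = tl = n$. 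Finally, since $S_1 \oplus \cdots \oplus S_l$ is a subspace of $V$ whose dimension equals $\dim_{\Fq} V = n$, it must coincide with $V$, which yields $V = S_1 \oplus \cdots \oplus S_l$.

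I do not expect a genuine obstacle here, since the Lemma already carries out the combinatorial heavy lifting (the counting bound limiting how many spread elements can meet a fixed sum nontrivially). The one point requiring care is that the Lemma is stated only for $r > 1$, so the very first extension, from a single subspace to a pair, cannot invoke it; instead I would obtain that step directly from the trivial-intersection axiom of a spread, as in the base case above. Once the induction is primed in this way, the remaining steps are uniform applications of the Lemma, and the dimension count closes the argument.
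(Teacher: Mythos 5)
Your proposal is correct and follows essentially the same route as the paper: start with two distinct spread elements (which meet trivially by definition), then repeatedly apply the preceding Lemma to extend the direct sum until its dimension reaches $n = tl$. You simply spell out the induction, the $l=1$ case, and the final dimension count more explicitly than the paper does.
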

\begin{proof}
Choose $S_1$ and $S_2$ as any elements of $\S$. By the definition of $t$-spreads, $S_1+S_2$ is a direct sum. The previous lemma says that we can increase it to $S_1\oplus S_2\oplus S_3$ and applying the lemma repetitively, we get the result.
\qed\end{proof}

\begin{remark}
The above property can be easily seen when the spread is Desarguesian. We can choose the $S_i$'s to be the $\Fq$-subspaces defined by the elements of a basis of $\Fqt^l$ over $\Fqt$.
\end{remark}

\subsection{Relation between ATW codes and \texorpdfstring{$t$}{t}-spreads}\label{sub:3.3}

We now give a relation between $(n-d)$-spreads and $[n,2,d]$ ATW rank metric codes for $d\geq n/2$. The following lemma is obtained by using the same method as used in classifying constant rank weight codes in \cite{Ran20}.
\begin{lemma}\label{lem:hyperpartition}
Let $\C$ be an $[n,2,d]$ ATW rank metric code over $\Fqm/\Fq$ with a generator matrix $\GG$ and let $X$ be the $q$-system corresponding to $\GG$. If $\H$ is the set of all hyperplanes of $\Fqm^2$, then we can partition $\H$ into $\H_1 \sqcup \H_2$, where
\begin{equation} \label{eq:hyp1}
   \H_1 = \{ H\in \H\colon \dim_{\Fq} H\cap X = n-d \}, \text{ and }
\H_2 = \{ H\in \H\colon \dim_{\Fq} H\cap X = 0 \}. 
\end{equation}
\end{lemma}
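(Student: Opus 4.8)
The plan is to reduce the statement to a direct application of Lemma~\ref{lem:2} together with the defining property of ATW codes. The core observation is that, since the code has dimension $2$, every hyperplane $H\in\H$ of $\Fqm^2$ is exactly the kernel $\mathrm{Ker}\,\psi_{\xx}$ of some nonzero $\xx\in\Fqm^2$, and this in turn singles out a codeword $\cc=\xx\GG\in\C$. So I would first set up this correspondence $H\leftrightarrow\cc$ carefully and verify that it meets every nonzero codeword, and then read off the possible intersection dimensions from the possible ranks.

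First I would note that because $\C$ is non-degenerate of dimension $2$, its generator matrix $\GG$ has full row rank, so the map $\xx\mapsto\xx\GG$ is injective on $\Fqm^2$; in particular $\xx\neq\0$ forces $\cc=\xx\GG\neq\0$. Thus each hyperplane $H=H_\cc=\mathrm{Ker}\,\psi_{\xx}$ corresponds to a genuinely nonzero codeword $\cc$. Next I would invoke Lemma~\ref{lem:2}, which gives $\dim_{\Fq}(X\cap H_\cc)=n-\rank(\cc)$, viewing $H_\cc$ as the $\Fq$-subspace of $\Fqm^2$ of dimension $m$. Finally I would apply the ATW hypothesis: every nonzero codeword satisfies $\rank(\cc)\in\{d,n\}$. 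Substituting these two values yields $\dim_{\Fq}(X\cap H)\in\{n-d,\,0\}$ for every $H\in\H$, which is exactly the claimed dichotomy, and setting $\H_1,\H_2$ as in \eqref{eq:hyp1} realises $\H=\H_1\sqcup\H_2$.

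The only point requiring a little care --- and the closest thing to an obstacle --- is verifying that this really is a \emph{disjoint} partition rather than a mere covering. Disjointness of $\H_1$ and $\H_2$ needs $n-d\neq 0$, which is guaranteed by the ATW convention $d<n$; I would state this explicitly so that no hyperplane can land in both classes. (Nonemptiness of the two parts, if one wishes to record it, follows from the existence of a full-rank codeword for $\H_2$ and from $d$ being the minimum distance for $\H_1$.) Beyond that, the argument is essentially a dictionary translation: the weight spectrum $\{d,n\}$ of the code becomes the intersection spectrum $\{n-d,0\}$ of the associated $q$-system $X$ with hyperplanes, with Lemma~\ref{lem:2} supplying the one-line bridge.
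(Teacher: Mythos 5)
Your proposal is correct and follows essentially the same route as the paper's proof: both use the correspondence between hyperplanes of $\Fqm^2$ and nonzero codewords, then apply Lemma~\ref{lem:2} to translate the ATW weight spectrum $\{d,n\}$ into the intersection dimensions $\{n-d,0\}$. Your extra remarks on injectivity of $\xx\mapsto\xx\GG$ and on disjointness via $d<n$ are welcome refinements but do not change the argument.
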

 \begin{proof}
Let $\cc \in \C$ be a nonzero codeword and let $H_{\cc}$ be the corresponding hyperplane i.e. $H_{\cc} = \mathrm{Ker} ~\psi_{\xx_\cc}$ in Lemma \eqref{eq:psi}. Then Lemma \ref{lem:2} implies that $\Fq$-dimension of $H_{\cc} \cap X$ is $n-d$ if $\rank(\cc) = d$ and $0$ otherwise. Since any hyperplane of $\Fqm^2$ corresponds to some codewords, we get the desired partition of $\H$ by considering $\H_1$ (resp. $\H_2$) to be the set of hyperplanes corresponding to codewords of rank $d$ (resp. $n$).
 \qed\end{proof}

\begin{theorem}\label{pro:spread}
Let $\C$ be an $[n,2,d]$ ATW rank metric code over $\Fqm/\Fq$ with a generator matrix $\GG$ and let $X$ be the $q$-system corresponding to $\GG$. Let $\H$ be the set of all hyperplanes of $\Fqm^2$ and $\H_1 = \{ H\in \H\colon \dim_{\Fq} H\cap X = n-d \}$. Then $|\H_1| = \frac{q^n-1}{q^{n-d}-1}$ and thus the set $\S = \{H\cap X : H \in \H_1\}$ defines an $(n-d)$-spread $(X,\S)$ of $X$.
\end{theorem}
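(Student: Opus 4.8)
The plan is to exploit the fact that, since $\C$ has dimension $k=2$, the ambient space $\Fqm^2$ is two-dimensional over $\Fqm$, so every hyperplane $H\in\H$ is a \emph{one}-dimensional $\Fqm$-subspace. The decisive observation is then that any nonzero vector $v\in\Fqm^2$ lies in exactly one such hyperplane, namely $\<v\>_{\Fqm}$, because a one-dimensional $\Fqm$-subspace containing $v$ must coincide with $\<v\>_{\Fqm}$. This single fact drives the whole argument.

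First I would use this to pin down the incidence between the nonzero vectors of $X$ and the elements of $\S$. Take any nonzero $v\in X\subseteq\Fqm^2$. The unique hyperplane through $v$ is $H_v=\<v\>_{\Fqm}$, and since $v\in H_v\cap X$ with $v\neq\0$ we have $\dim_{\Fq}(H_v\cap X)\neq 0$; by Lemma~\ref{lem:hyperpartition} this forces $H_v\in\H_1$, so that $\dim_{\Fq}(H_v\cap X)=n-d$. Conversely, no hyperplane other than $H_v$ contains $v$. Hence every nonzero $v\in X$ belongs to $H\cap X$ for one and only one $H\in\H_1$.

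Next I would read off the spread axioms of Definition~\ref{def:spread} from this incidence. Covering is immediate: every nonzero vector of $X$ lies in some $H\cap X$, and $\0$ lies in all of them, so $\bigcup_{S\in\S}S=X$. Trivial intersection follows too: if $H,H'\in\H_1$ are distinct and $\0\neq v\in(H\cap X)\cap(H'\cap X)$, then both $H$ and $H'$ equal $\<v\>_{\Fqm}$, contradicting $H\neq H'$; thus $(H\cap X)\cap(H'\cap X)=\{\0\}$. The same computation shows the map $H\mapsto H\cap X$ is injective on $\H_1$, since any nonzero vector of $H\cap X$ recovers $H=\<v\>_{\Fqm}$. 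As every element of $\S$ has $\Fq$-dimension $n-d$ by the definition of $\H_1$, the pair $(X,\S)$ satisfies all the conditions of Definition~\ref{def:spread} and is therefore an $(n-d)$-spread; in particular the divisibility $(n-d)\mid n$ comes out automatically.

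Finally, the count is a direct consequence of the partition. The nonzero vectors of $X$ number $q^n-1$, and by the two properties above they are partitioned by $\S$ into classes, one per element of $\S$, each consisting of the $q^{n-d}-1$ nonzero vectors of an $(n-d)$-dimensional subspace. Therefore $|\S|=\frac{q^n-1}{q^{n-d}-1}$, and by the injectivity of $H\mapsto H\cap X$ we conclude $|\H_1|=|\S|=\frac{q^n-1}{q^{n-d}-1}$. I do not expect a genuine obstacle here: the only real content is the dimension-two observation that each point lies on a unique hyperplane, after which both the spread structure and the cardinality are forced.
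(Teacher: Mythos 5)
Your proof is correct and follows essentially the same route as the paper: both arguments rest on the observation that the hyperplanes of $\Fqm^2$ are exactly the one-dimensional $\Fqm$-subspaces, so they partition $\Fqm^2\setminus\{\0\}$, and on the dichotomy of Lemma~\ref{lem:hyperpartition}; the paper packages the resulting count $q^n-1=|\H_1|(q^{n-d}-1)$ via an indicator-function sum over that partition, while you partition $X\setminus\{\0\}$ directly. Your write-up is in fact slightly more complete, since you explicitly verify the covering and trivial-intersection axioms and the injectivity of $H\mapsto H\cap X$, points the paper leaves implicit.
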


\begin{proof}
 Lemma \ref{lem:hyperpartition} implies $ \H =\H_1 \sqcup \H_2$ where $\H_2$ as in equation \eqref{eq:hyp1}. Note that the hyperplanes have pairwise trivial intersection. Therefore we can partition the set $\Fqm^2\backslash\{\0\}$ as
\begin{equation}\label{eq:union}
\Fqm^2 \backslash\{\0\} = \left(\bigsqcup_{H\in \H_1} H\backslash\{\0\}\right) \bigsqcup \left(\bigsqcup_{H\in \H_2} H\backslash\{\0\}\right),
\end{equation}
where $H\backslash \{\0\}$ is the set of non-zero elements of $H$.
Define a valuation $v$ on $\Fqm^2$ by
\[
v(\aa)=
\begin{cases}
1 ~\text{ if } \aa\in X, \\
0 ~\text{ otherwise},
\end{cases}
\]
and by an abuse of notation, we also define $v(S)=\sum_{\aa\in S}v(\aa)$ for any subset $S$ of $\Fqm^2$.
Since all the unions in equation \eqref{eq:union} are disjoint, then
\[
v(\Fqm^2 \backslash\{\0\}) = \sum_{H\in \H_1}v\left( H\backslash\{\0\}\right) + \sum_{H\in \H_2}v\left( H\backslash\{\0\}\right),
\]
Now $v(H\backslash\{\0\})=q^{n-d}-1$ for $H\in \H_1$, and $v(H\backslash\{\0\})=0$ if $H\in \H_2$. Furthermore $v(\Fqm^2 \backslash\{\0\}) = v(X\backslash\{\0\})$. Therefore,

\begin{equation}\label{eq:H1}
q^n-1 = |\H_1|(q^{n-d}-1).
\end{equation}

The pairwise trivial intersection of elements of $\S$ implies that $\S$ is an $(n-d)$-spread.
\qed\end{proof}

\begin{corollary}\label{cor:div}
If $\C$ is an $[n,2,d]$ ATW rank metric code, then $n-d$ divides $n$.
\end{corollary}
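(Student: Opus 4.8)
The plan is to read the divisibility directly off the structure established in Theorem \ref{pro:spread}. That theorem produces an $(n-d)$-spread $(X,\S)$, where the $q$-system $X$ is an $\Fq$-space of dimension $n$. Recall from Section \ref{sub:2.2} the classical result of Segre \cite{Seg64} that an $\Fq$-space of dimension $n$ admits a $t$-spread if and only if $t$ divides $n$. Applying this with $t = n-d$ immediately gives $(n-d)\mid n$, which is exactly the claim.

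Since the cited existence result already bundles the divisibility statement, I would also spell out a self-contained arithmetic argument straight from the counting identity \eqref{eq:H1}. That identity reads $q^n - 1 = |\H_1|(q^{n-d}-1)$, and because $|\H_1|$ counts hyperplanes it is a positive integer; hence $q^{n-d}-1$ divides $q^n-1$. I would then invoke the elementary fact that for a prime power $q$ and positive integers $a,b$ one has $(q^a-1)\mid(q^b-1)$ if and only if $a\mid b$. With $a = n-d$ and $b = n$ this yields $(n-d)\mid n$.

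To justify the arithmetic fact if a referee wants it, I would use the division algorithm: write $n = s(n-d) + r$ with $0 \le r < n-d$. Since $q^{n-d}\equiv 1 \pmod{q^{n-d}-1}$, one gets $q^n \equiv q^r \pmod{q^{n-d}-1}$, so $q^{n-d}-1$ divides $q^n - q^r = q^r(q^{s(n-d)}-1)$; as $\gcd(q^r, q^{n-d}-1)=1$ this forces $q^{n-d}-1 \mid q^{n}-1$ to reduce to $q^{n-d}-1 \mid q^r-1$, and $0 \le r < n-d$ makes $q^r - 1 < q^{n-d}-1$, whence $r = 0$. There is no genuine obstacle here: the corollary is an immediate consequence of Theorem \ref{pro:spread} together with \eqref{eq:H1}, and the only point requiring a word of care is that $|\H_1|$ is integral so that the divisibility $q^{n-d}-1 \mid q^n-1$ is honest.
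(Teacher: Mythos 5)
Your proof is correct and follows essentially the same route as the paper: the paper likewise deduces the divisibility from Theorem \ref{pro:spread}, observing that $|\S| = \frac{q^n-1}{q^{n-d}-1}$ forces $(n-d)\mid n$. Your additional self-contained verification via the identity \eqref{eq:H1} and the fact that $(q^a-1)\mid(q^b-1)$ iff $a\mid b$ is sound and simply makes explicit the arithmetic the paper leaves implicit.
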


\begin{proof}
Theorem \ref{pro:spread} implies that a $q$-system $X$ has an $(n-d)$-spread $\S$ such that $|\S| = \frac{q^n-1}{q^{n-d}-1}$ and thus $n-d$ divides $n$.
\qed\end{proof}

Corollary \ref{cor:div} recovers the result of Lemma \ref{lem:4}.

\begin{corollary}
The weight distribution of an $[n,2,d]$ ATW code over $\Fqm$ is 
\[
\left(1, \, (q^m-1)\frac{q^n-1}{q^{n-d}-1}, \, (q^{2m}-1) - (q^m-1)\frac{q^n-1}{q^{n-d}-1}\right).
\]
\end{corollary}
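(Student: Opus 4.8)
The plan is to count directly the number $A_d$ of codewords of rank $d$, after which the number $A_n$ of full-rank codewords follows by subtracting from the total number $q^{2m}-1$ of nonzero codewords (the code has dimension $2$), while $A_0 = 1$ is immediate. So the weight distribution $(A_0, A_d, A_n)$ will match the asserted triple once $A_d$ is pinned down.

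First I would record the precise relationship between codewords and hyperplanes. Since $\GG$ is a generator matrix of the $2$-dimensional code $\C$, the map $\xx \mapsto \xx\GG$ is an $\Fqm$-linear isomorphism $\Fqm^2 \to \C$; in particular there are exactly $q^{2m}-1$ nonzero codewords. Each nonzero $\cc = \xx_\cc\GG$ determines the hyperplane $H_\cc = \mathrm{Ker}\,\psi_{\xx_\cc}$, and by Lemma \ref{lem:2} its rank is $d$ precisely when $H_\cc \in \H_1$ and $n$ precisely when $H_\cc \in \H_2$, using the partition of Lemma \ref{lem:hyperpartition}. The key combinatorial step is then to see that the assignment $\cc \mapsto H_\cc$ is $(q^m-1)$-to-one from nonzero codewords onto hyperplanes. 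Indeed, $H_\cc$ depends only on the $\Fqm$-line spanned by $\xx_\cc$, since $\psi_{\lambda\xx_\cc}$ and $\psi_{\xx_\cc}$ have the same kernel for every $\lambda \in \Fqm^\times$; conversely, in the $2$-dimensional space $\Fqm^2$ a hyperplane is a $1$-dimensional $\Fqm$-subspace, and the correspondence $[\xx] \mapsto \mathrm{Ker}\,\psi_\xx$ between projective points and hyperplanes is a bijection. Each such line contributes exactly $q^m-1$ nonzero vectors $\xx_\cc$, hence (through the isomorphism above) exactly $q^m-1$ distinct nonzero codewords, all of the same rank because rank is invariant under scaling by $\Fqm^\times$.

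Putting these together, the rank-$d$ codewords are exactly those whose hyperplane lies in $\H_1$, so $A_d = (q^m-1)\,|\H_1|$. By Theorem \ref{pro:spread}, $|\H_1| = \frac{q^n-1}{q^{n-d}-1}$, giving $A_d = (q^m-1)\frac{q^n-1}{q^{n-d}-1}$ as claimed, and finally $A_n = (q^{2m}-1) - A_d$ since every nonzero codeword has rank $d$ or $n$. I do not expect a genuine obstacle here; the only point demanding care is the $(q^m-1)$-to-one counting, ensuring that scalar multiples are neither over- nor under-counted when passing between codewords and hyperplanes.
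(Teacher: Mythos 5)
Your proof is correct and follows essentially the same route as the paper: both arguments reduce the count to $(q^m-1)\lvert\H_1\rvert$ via the $(q^m-1)$-to-one correspondence between nonzero codewords and hyperplanes of $\Fqm^2$, and then invoke $\lvert\H_1\rvert=\frac{q^n-1}{q^{n-d}-1}$ from Theorem~\ref{pro:spread}. The only cosmetic difference is that you obtain $A_n$ by subtracting from $q^{2m}-1$ while the paper computes $(q^m-1)\lvert\H_2\rvert$ using $\lvert\H\rvert=q^m+1$; these agree since $(q^m-1)(q^m+1)=q^{2m}-1$, and your explicit justification of the $(q^m-1)$-to-one counting is a detail the paper merely asserts.
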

\begin{proof}
Following Equation \eqref{eq:union}, we have
\[
q^{2m}-1 = (q^m-1)|\H_1| + (q^m-1)|\H_2|.
\]
It is clear that the codewords of rank weight $d$ correspond to the hyperplanes in $\mathcal{H}_1$ and those of rank weight $n$ correspond to the hyperplanes in $\H_2$, where $\H_1, H_2$ are as described in Theorem \ref{pro:spread}. From Equation \eqref{eq:H1} it follows that the number of codewords of rank weight $d$ is equal to $(q^m-1)|\H_1| = (q^m-1)\frac{q^n-1}{q^{n-d}-1}$. Since the number of hyperplanes of $\Fqm^2$ is $|\H|=q^m+1$, and since $\H_1$ and $\H_2$ give a partition of $\H$, the number of codewords of rank weight $n$ is $(q^m-1)|\H_2| = (q^m-1)(q^m+1 -  \frac{q^n-1}{q^{n-d}-1})$. 
\qed\end{proof}

\begin{theorem}\label{decomposition}
Let $\C$ be an $[n,2,d]$ ATW rank metric code. Suppose that $n=l(n-d)$, then $\C$ is equivalent to a rank metric code with generator matrix $\GG=[\GG_1|\dots|\GG_l]$, where, for $1\leq i\leq l$, the columns of each $\GG_i$'s belong  to a hyperplane $H_i$ and the $H_i$'s are pairwise distinct.
\end{theorem}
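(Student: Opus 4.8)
The plan is to translate the geometric decomposition of the associated $q$-system, which Theorem \ref{thm:split} guarantees, into the block form of a generator matrix, and then to argue that rechoosing the $\Fq$-basis of the $q$-system only moves the code within its equivalence class. First I would set $t=n-d$ and let $X$ be the $q$-system attached to a generator matrix $\GG$ of $\C$. By Theorem \ref{pro:spread}, $X$ carries a $t$-spread $\S=\{H\cap X\colon H\in \H_1\}$, and since $n=tl$, Theorem \ref{thm:split} produces elements $S_1,\dots,S_l\in \S$ with $X=S_1\oplus S_2\oplus\dots\oplus S_l$. Each $S_i$ equals $H_i\cap X$ for some hyperplane $H_i\in \H_1$; fixing one such $H_i$ per $S_i$, the $H_i$ are automatically pairwise distinct, because $H_i=H_j$ would force $S_i=H_i\cap X=H_j\cap X=S_j$, contradicting $S_i\neq S_j$.

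Next I would build the matrix itself. Because the sum $X=S_1\oplus\dots\oplus S_l$ is direct and each $S_i$ has $\Fq$-dimension $t$, choosing an $\Fq$-basis of each $S_i$ and concatenating them yields an $\Fq$-basis of $X$ made of $lt=n$ vectors of $\Fqm^2$. Let $\GG_i$ be the $2\times t$ matrix whose columns are the chosen $\Fq$-basis of $S_i$, and set $\GG'=[\GG_1|\dots|\GG_l]$. By construction every column of $\GG_i$ lies in $S_i\subseteq H_i$, so the whole block $\GG_i$ has its columns inside the hyperplane $H_i$, exactly as the statement requires, and the $H_i$ are pairwise distinct by the previous paragraph.

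It then remains to check that $\GG'$ generates a code equivalent to $\C$. The columns of $\GG'$ form an $\Fq$-basis of the very same $q$-system $X$ that the columns of $\GG$ span, so $\GG'=\GG\MM$ for the (unique, invertible) change-of-basis matrix $\MM\in \Fq^{n\times n}$; this is precisely an equivalence of rank metric codes in the sense of the definition from \cite{Mor14} (take $\SS=I$ and $\alpha=1$). Moreover $\GG'$ has $\Fqm$-rank $2$, equivalently $X$ spans $\Fqm^2$ over $\Fqm$, since $\C$ is non-degenerate of dimension $2$; hence $\GG'$ is a genuine generator matrix of an $[n,2]$ code equivalent to $\C$, and the proof would conclude.

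I expect the only delicate point to be this last step: confirming that passing from one $\Fq$-basis of $X$ to another is realized by right multiplication by an invertible $\Fq$-matrix, so that the construction never leaves the equivalence class, together with the observation that the $\Fqm$-span of $X$ is all of $\Fqm^2$ so that no rank is lost in forming $\GG'$. All of the substantive work is already carried by Theorem \ref{thm:split}; the remainder is a faithful dictionary between the spread decomposition of $X$ and the block structure of the generator matrix.
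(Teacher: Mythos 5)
Your proof is correct and follows essentially the same route as the paper: invoke Theorem \ref{pro:spread} to get the $(n-d)$-spread on $X$, use Theorem \ref{thm:split} to split $X$ as a direct sum of spread elements, and take the blocks $\GG_i$ to be $\Fq$-bases of the $H_i\cap X$. You simply spell out the details (distinctness of the $H_i$ and the fact that rebasing $X$ is right multiplication by an invertible matrix over $\Fq$) that the paper leaves implicit.
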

\begin{proof}
Theorem \ref{pro:spread} says that $X$ admits an $(n-d)$-spread. Then we choose the columns of $\GG$ from the decomposition in Theorem \ref{thm:split}, where each block $\GG_i$ consists of a $\Fq$-basis of $H_i \cap X$ for $H_i \in \H_1$.
\qed\end{proof}

We have shown that ATW rank metric codes define $t$-spreads in the q-system $X$. The following theorem shows that these $t$-spreads satisfy a certain property.

\begin{theorem}\label{thm:6}
Let $\C$ be an $[n,2,d]$ rank metric code over $\Fqm/\Fq$ generated by $\GG$ with the corresponding $q$-system $X$. Let $(\Fqm^2,\D)$ be a Desarguesian $m$-spread. The following assertions are equivalent.
\begin{enumerate}[(i)]
    \item\label{thm:6-i} $\C$ is an ATW rank metric code.
    \item\label{thm:6-ii} $(\Fqm^2,\D)$ induces an $(n-d)$-spread $(X,\D_X)$ of $X$.
\end{enumerate}
\end{theorem}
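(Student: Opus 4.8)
The plan is to exploit the fact that, for $k=2$, the Desarguesian $m$-spread $\D$ is nothing but the collection $\H$ of all hyperplanes of $\Fqm^2$: a hyperplane is a $1$-dimensional $\Fqm$-subspace, hence an $m$-dimensional $\Fq$-subspace, and the set of all of these is exactly $\D$. So $\D=\H$ and consequently $\D_X=\{H\cap X : H\in\H\}\setminus\{\{\0\}\}$. By Lemma \ref{lem:2} each hyperplane $H$ equals $H_\cc$ for a codeword $\cc$ with $\rank(\cc)=n-\dim_{\Fq}(H\cap X)$, so the $\Fq$-dimension of $H\cap X$ and the rank of the associated codeword determine one another. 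The whole argument then reduces to translating (i) and (ii) into conditions on the dimensions $\dim_{\Fq}(H\cap X)$ as $H$ ranges over $\H$.

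For (i)$\Rightarrow$(ii) I would simply invoke work already done. If $\C$ is ATW, Lemma \ref{lem:hyperpartition} gives the partition $\H=\H_1\sqcup\H_2$ with $\dim_{\Fq}(H\cap X)=n-d$ on $\H_1$ and $=0$ on $\H_2$. Thus the only nonzero intersections are those coming from $\H_1$, i.e. $\D_X=\{H\cap X:H\in\H_1\}$, and Theorem \ref{pro:spread} states precisely that this family is an $(n-d)$-spread of $X$. Hence $(X,\D_X)$ is the induced $(n-d)$-spread.

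For (ii)$\Rightarrow$(i), suppose $(X,\D_X)$ is an $(n-d)$-spread. Being a spread, all its members share the same dimension $n-d$, so for every $H\in\H$ one has $\dim_{\Fq}(H\cap X)\in\{0,n-d\}$; by Lemma \ref{lem:2} every nonzero codeword then has rank $d$ or $n$, and $d<n$ because the spread elements are nonzero. It remains to produce a codeword of full rank $n$, i.e. a hyperplane $H$ with $H\cap X=\{\0\}$. Here I would count: since distinct elements of the spread $\D$ meet trivially, the map $H\mapsto H\cap X$ is injective on the hyperplanes meeting $X$ nontrivially (if $H_1\cap X=H_2\cap X\neq\{\0\}$ then this space lies in $H_1\cap H_2=\{\0\}$, a contradiction), so the number of such hyperplanes equals $|\D_X|=\frac{q^n-1}{q^{n-d}-1}$. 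Comparing with $|\H|=q^m+1$, a full-rank codeword exists as soon as $\frac{q^n-1}{q^{n-d}-1}<q^m+1$.

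I expect this last strict inequality to be the main obstacle, since it is exactly the point that can fail: if every hyperplane met $X$ nontrivially the code would be constant-weight, and a short computation shows $\frac{q^n-1}{q^{n-d}-1}=q^m+1$ holds precisely in the degenerate Hadamard situation $n=2m$, $d=m$ (where $X=\Fqm^2$ and $\D_X=\D$). Away from that case the inequality is strict; in particular, since any genuinely ATW code satisfies $n\le m$ (it has a full-rank codeword), under $n\le m$ one has the crude bound $\frac{q^n-1}{q^{n-d}-1}\le q^n-1<q^m+1$, which suffices. I would therefore run the proof under the standing hypothesis $n\le m$ (equivalently, excluding the constant-weight Hadamard code), where the counting forces the existence of the full-rank codeword and completes (ii)$\Rightarrow$(i).
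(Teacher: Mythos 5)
Your proof of \eqref{thm:6-i}$\Rightarrow$\eqref{thm:6-ii} is exactly the paper's: identify $\D$ with the set $\H$ of hyperplanes of $\Fqm^2$ and quote Lemma \ref{lem:hyperpartition} and Theorem \ref{pro:spread}. For \eqref{thm:6-ii}$\Rightarrow$\eqref{thm:6-i} you go further than the paper does, and the extra step is genuinely needed. The paper's proof of this direction only argues that every hyperplane meets $X$ in dimension $0$ or $n-d$, hence every nonzero codeword has rank $d$ or $n$; it never verifies that a codeword of full rank $n$ actually exists, which is part of the definition of ATW. Your counting argument supplies this: the hyperplanes meeting $X$ nontrivially are in bijection with $\D_X$ (your injectivity observation, using that distinct lines of $\Fqm^2$ meet trivially, is correct), so a full-rank codeword exists iff $\frac{q^n-1}{q^{n-d}-1}<q^m+1$. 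You also correctly isolate the unique case of equality, $n=2m$ and $d=m$, i.e.\ $X=\Fqm^2$ and $\C$ the Hadamard code $H_1(q,m,2)$: there \eqref{thm:6-ii} holds (the induced spread is $\D$ itself) while $\C$ is constant-weight, not ATW, so the implication as literally stated fails and some hypothesis such as $n\le m$ (or simply $X\neq\Fqm^2$) must be added. In short, your proof is correct under the stated restriction and patches a real gap in the published argument; the only caveat is that the restriction $n\le m$ is an added hypothesis not present in the theorem, though your own analysis shows it excludes only the single Hadamard counterexample (for $d<m$ condition \eqref{thm:6-ii} already forces strict inequality and hence $n\le m$).
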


\begin{proof}
\begin{itemize}
    \item[]\hspace{-0.8cm}$\eqref{thm:6-i} \Rightarrow \eqref{thm:6-ii}$: Since $\C$ is an ATW rank metric code, following Theorem \ref{pro:spread} we know that $\D_X = \{X \cap H : H \in \H_1\}$ is an $(n-d)$-spread of $X$. Also, from Definition \ref{def:subspread}, it follows that $(X,\D_X)$ is an $(n-d)$-subspread of $(\Fqm^2,\D)$.
    \item[]\hspace{0.2cm}$\eqref{thm:6-ii} \Rightarrow \eqref{thm:6-i}$: Suppose that $(X,\D_{X})$ is an $(n-d)$-subspread of $(\Fqm^2,\D)$. Let $\H$ be the set of all hyperplanes in $\Fqm^2$. Then from Definition \ref{def:desarguesian} and Definition \ref{def:subspread}, it follows that an element of $\H$ intersects $X$ in an $\Fq$-space of dimension either $n-d$ or $0$. From Lemma \ref{lem:2}, we have that for any $H \in \H$, $H = H_\cc$  for some $\cc \in \C$ and conversely, any $\cc\in \C$ defines a hyperplane $H_\cc$ (two linearly dependent codewords define the same hyperplane). Then Equation \eqref{eq:rank} in Lemma \ref{lem:2} implies that the only possible rank for any non-zero codewords are $d$ and $n$.
    \qed
\end{itemize}
\end{proof}

In the following result, we give a condition for any general $(n-d)$-spread of $\Fq^n$ to induce $[n,2,d]$ ATW rank metric codes.

\begin{corollary}
Let $n, m, d$ be positive integers with $d\leq n \leq m$ and let $\S$ be an $(n-d)$-spread of $\Fq^n$. Let $X \subseteq \Fqm^2$ be a $q$-system and suppose that $\GG$ is a $(2 \times n)$-matrix whose columns consist of an $\Fq$-basis of $X$. If for any $S\in \S$, $\GG S^T = \{\GG\xx^T\colon \xx\in S \}$ is contained in a hyperplane $H$ of $\Fqm^2$ and each different $S$ correspond to different $H$, then $\GG$ generates an ATW rank metric code.
\end{corollary}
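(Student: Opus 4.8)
The plan is to show that, under the stated hypothesis, the Desarguesian $m$-spread of $\Fqm^2$ induces an $(n-d)$-spread on $X$, so that the conclusion follows from Theorem~\ref{thm:6}. First I would record the identification underlying the whole argument: the hyperplanes of $\Fqm^2$ are precisely its one-dimensional $\Fqm$-subspaces, which, viewed as $m$-dimensional $\Fq$-subspaces, are exactly the members of the Desarguesian $m$-spread $\D$. Thus $\H=\D$, and in particular each hyperplane $H$ appearing in the hypothesis is an element of $\D$. Next, since the columns of $\GG$ form an $\Fq$-basis of $X$, the map $\xx\mapsto \GG\xx^T$ is an $\Fq$-linear isomorphism $\Fq^n\to X$; it carries the $(n-d)$-spread $\S$ of $\Fq^n$ to the family $\S'=\{\GG S^T : S\in\S\}$, which is again an $(n-d)$-spread of $X$, since covering and pairwise trivial intersection are preserved by an isomorphism.

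The main step is a counting argument forcing the hyperplanes supplied by the hypothesis to cut $X$ in dimension exactly $n-d$. For each $S\in\S$ let $H_S\in\D$ be the hyperplane containing $\GG S^T$; by assumption $S\mapsto H_S$ is injective. Two distinct hyperplanes of $\Fqm^2$ (distinct one-dimensional $\Fqm$-subspaces of a two-dimensional space) meet only in $\0$, so the subspaces $H_S\cap X$ have pairwise trivial intersection, and each contains $\GG S^T$, hence has $\Fq$-dimension at least $n-d$. Since the $\GG S^T$ already cover $X$, so do the $H_S\cap X$, and therefore the nonzero parts of $\{H_S\cap X : S\in\S\}$ partition $X\backslash\{\0\}$. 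Using $|\S|=\frac{q^n-1}{q^{n-d}-1}$ this gives
\[
q^n-1=\sum_{S\in\S}\left(q^{\dim_{\Fq}(H_S\cap X)}-1\right)\ge |\S|\,(q^{n-d}-1)=q^n-1,
\]
so equality holds throughout and $\dim_{\Fq}(H_S\cap X)=n-d$ for every $S$; consequently $H_S\cap X=\GG S^T$.

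It remains to check that no other hyperplane meets $X$ nontrivially, so that $\D_X$ coincides with $\S'$. For a hyperplane $D\in\D$ distinct from all $H_S$, we have $D\cap H_S=\{\0\}$ and hence $(D\cap X)\cap \GG S^T=\{\0\}$ for every $S$; since the $\GG S^T$ cover $X$, this forces $D\cap X=\{\0\}$. Therefore $\D_X=\S'$ is an $(n-d)$-spread of $X$, i.e. $(X,\D_X)$ is an $(n-d)$-subspread of the Desarguesian $m$-spread $(\Fqm^2,\D)$, and Theorem~\ref{thm:6} yields that $\C$ is ATW. The existence of a full-rank codeword, also required by the definition, is automatic: $|\S'|=\frac{q^n-1}{q^{n-d}-1}\le q^n-1\le q^m-1<q^m+1=|\H|$, so some hyperplane $D$ satisfies $D\cap X=\{\0\}$, and by Lemma~\ref{lem:2} the corresponding codeword has rank $n$. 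I expect the counting step to be the main obstacle, since it is what pins the intersection dimensions down to the single value $n-d$ and rules out spread elements being absorbed into larger hyperplane sections; the remaining work is bookkeeping with the isomorphism $\xx\mapsto\GG\xx^T$ and the identification $\H=\D$.
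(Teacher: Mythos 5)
Your proof is correct and follows the route the paper clearly intends: the corollary is stated there without proof as a consequence of Theorem~\ref{thm:6}, and you supply exactly the missing verification that the hypotheses force $\D_X$ to coincide with the $(n-d)$-spread $\{\GG S^T : S\in\S\}$, together with the existence of a full-rank codeword. One minor remark: the counting step can be bypassed, since any nonzero $v\in H_S\cap X$ outside $\GG S^T$ would lie in some $\GG (S')^T\subseteq H_{S'}$ with $H_{S'}\neq H_S$, giving $v\in H_S\cap H_{S'}=\{\0\}$ --- but your count is valid as written.
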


\section{Constructions of ATW rank metric codes}\label{sec:4}
In Example \ref{exa:1} of Section \ref{sub:3.1} we have seen a construction of ATW rank metric codes. We construct ATW rank metric codes for more general parameters using suitable MRD codes. 

\begin{theorem}\label{thm:MRDto2weight}
Let $d,l,m, n$ be positive integers such that $d \leq n\leq m$, $(n-d) | m$, and $n = l(n-d)$. Suppose $\C$ is a non-degenerate $[l,2,l-1]$ MRD code over $\Fqm/\F_{q^{n-d}}$ with a generator matrix $\GG$. Fix a basis $\{a_1, \ldots, a_{n-d}\}$ of $\F_{q^{n-d}}$ over $\Fq$ and let $\tilde{\GG}$ be the $2 \times l(n-d)$ block matrix $(\tilde{\GG}_1| \ldots| \tilde{\GG}_l)$ where the $(n-d)$ columns of $\tilde{\GG}_i$ are given by $\{a_j \GG_i, j=1, 2, \ldots, n-d \}$ with $\GG_i$ being the $i$-th column of $\GG$. Then $\tilde{\GG}$ generates an $[n,2,d]$ ATW rank metric code $\tilde{\C}$ over $\Fqm/\Fq$.
\end{theorem}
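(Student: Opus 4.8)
The plan is to reduce the whole statement to a single rank identity relating the $\Fq$-rank of a codeword of $\tilde{\C}$ to the $\F_{q^{n-d}}$-rank of the corresponding codeword of $\C$, and then to read off the two admissible weights directly from the MRD property of $\C$.

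First I would fix $\xx=(x_1,x_2)\in \Fqm^2$ and write $\cc=\xx\GG=(c_1,\dots,c_l)$ for the associated codeword of $\C$, where $c_i=\xx\cdot\GG_i$. By construction the codeword $\tilde{\cc}=\xx\tilde{\GG}$ has entries $\{a_j c_i : 1\le j\le n-d,\ 1\le i\le l\}$, since the columns of $\tilde{\GG}_i$ are $a_j\GG_i$ and $\xx\cdot(a_j\GG_i)=a_j(\xx\cdot\GG_i)=a_j c_i$. The key computation is
\[
\<a_j c_i : 1\le j\le n-d,\ 1\le i\le l\>_{\Fq}=\sum_{i=1}^l c_i\,\F_{q^{n-d}}=\<c_1,\dots,c_l\>_{\F_{q^{n-d}}},
\]
because $\{a_1,\dots,a_{n-d}\}$ is an $\Fq$-basis of $\F_{q^{n-d}}$, so for fixed $i$ the elements $a_1 c_i,\dots,a_{n-d}c_i$ $\Fq$-span $c_i\F_{q^{n-d}}$. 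Since the resulting span is an $\F_{q^{n-d}}$-subspace of $\Fqm$, its $\Fq$-dimension is $(n-d)$ times its $\F_{q^{n-d}}$-dimension, which yields
\[
\rank(\tilde{\cc})=(n-d)\cdot \rank_{\F_{q^{n-d}}}(\cc),
\]
where the left-hand rank is over $\Fq$ and the right-hand rank is over $\F_{q^{n-d}}$.

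Next I would invoke the MRD hypothesis. Since $\C$ is a non-degenerate $[l,2,l-1]$ MRD code over $\Fqm/\F_{q^{n-d}}$ and $n\le m$ forces $l\le m/(n-d)$, every nonzero codeword of $\C$ has $\F_{q^{n-d}}$-rank either $l-1$ or $l$, and by \cite[Proposition 3.11]{ABNR21} a codeword of full rank $l$ actually occurs. Substituting into the rank identity, nonzero codewords of $\tilde{\C}$ have $\Fq$-rank either $(n-d)(l-1)=n-(n-d)=d$ or $(n-d)l=n$, with the value $n$ attained; as $n-d\ge 1$ gives $d<n$, this is precisely the ATW weight pattern.

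It remains to confirm that $\tilde{\C}$ is a genuine non-degenerate $[n,2]$ code. Dimension $2$ follows because $\xx\tilde{\GG}=\0$ forces all $c_i=0$, i.e. $\xx\GG=\0$, hence $\xx=\0$ since $\C$ has dimension $2$; thus $\xx\mapsto\xx\tilde{\GG}$ is injective. For non-degeneracy I would observe that the $q$-system $\tilde{X}=\<a_j\GG_i\>_{\Fq}$ equals $\<\GG_1,\dots,\GG_l\>_{\F_{q^{n-d}}}$ as an $\Fq$-space, and since the columns of $\GG$ are $\F_{q^{n-d}}$-linearly independent (as $\C$ is non-degenerate), this space has $\F_{q^{n-d}}$-dimension $l$ and hence $\Fq$-dimension $(n-d)l=n$. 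The only step demanding real care is the rank identity above; once the passage between the $\Fq$-span and the $\F_{q^{n-d}}$-span is set up cleanly, everything else is bookkeeping.
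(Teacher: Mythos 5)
Your proposal is correct, and it establishes exactly the same key claim as the paper --- that $\rank(\tilde{\cc}) = (n-d)\cdot\rank_{\F_{q^{n-d}}}(\cc)$ for every $\xx \in \Fqm^2$ --- but it proves that claim from the opposite side. The paper works on the $q$-system side: it observes that the $q$-system $\tilde{X}$ of $\tilde{\GG}$ coincides, as a set, with the $\F_{q^{n-d}}$-system $X$ of $\GG$, and then reads off both ranks from the hyperplane-intersection formula of Lemma~\ref{lem:2}, converting $\dim_{\F_{q^{n-d}}}(H_\cc\cap X)$ into $\dim_{\Fq}(H_\cc\cap\tilde X)$. You instead work on the codeword side, showing directly that the $\Fq$-span of the entries $\{a_j c_i\}$ of $\tilde{\cc}$ equals the $\F_{q^{n-d}}$-span of the entries of $\cc$, and comparing dimensions over the two base fields. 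The two computations are dual to one another (kernel versus image of $\phi_\cc$), so the mathematical content is the same; your version is slightly more elementary in that it does not route through the geometric machinery. Your write-up is also a bit more complete than the paper's: you explicitly verify that $\tilde{\C}$ has dimension $2$ and is non-degenerate of length $n$, and you justify the existence of a full-rank codeword of $\C$ via \cite[Proposition 3.11]{ABNR21} together with $l \le m/(n-d)$, points the paper compresses into ``it is easy to see.'' The only detail left implicit on your side is that a codeword of rank exactly $d$ exists in $\tilde{\C}$ (so that $d$ really is the minimum distance); this follows immediately since $\C$ has minimum distance $l-1$ and hence possesses a codeword of rank $l-1$.
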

\begin{proof}
We claim that if the codeword $c = (c_1, c_2) \GG$ of $\C$ has rank weight $l$ (respectively, $l-1$), then the codeword $\tilde{c} = (c_1, c_2) \tilde{\GG}$ of $\tilde{\C}$ has rank weight $l(n-d)$ (respectively, $(l-1)(n-d)$). 
It is easy to see that if the claim is proved then $\tilde{\C}$ is an ATW $[n,2,d]$ code over $\Fqm/\Fq$.

\textbf{Proof of the claim}: Let $X$ (resp. $\tilde{X}$) be the $q$-system corresponding to the generator matrix $\GG$ (resp. $\tilde{\GG}$). It is crucial to note that the $\Fq$-space $\tilde{X}$ and the $\F_{q^{n-d}}$-space $X$ are same as a set. Now let the codeword $c = (c_1, c_2) \GG$ of $\C$ has rank weight $l$. Thus from equation \eqref{eq:psi}, we get $\dim_{\F_{q^{n-d}}} H_c \cap X = 0$ and hence, $\dim_{\Fq} H_c \cap \tilde{X} = 0$. Therefore, $rank(\tilde{c}) = n - \dim_{\Fq} H_c \cap \tilde{X} = n$. Similarly, for the case when $rank(c)=l-1$ we have $\dim_{\F_{q^{n-d}}} H_c \cap X = 1$ (follows from equation \eqref{eq:psi}). Then $\dim_{\Fq} H_c \cap \tilde{X} = \dim_{\Fq} H_c \cap {X} = n-d$ and therefore, $rank(\tilde{c})=n -(n-d) = d$.
\qed\end{proof}

 The above theorem says that if $X$ is a $q^{n-d}$-system corresponding to an $[l,2,l-1]$ MRD code over $\Fqm/\F_{q^{n-d}}$, then $X$ is a $q$-system corresponding to an ATW $[n,2,d]$ code $\tilde{\C}$ over $\Fqm/\Fq$. We call the code $\tilde{\C}$ the \emph{ATW rank metric code induced} by the MRD rank metric code $\C$.
\begin{corollary}
 Let $d,l,m,n,$ be positive integers such that $d \leq n\leq m$, $n = l(n-d)$, and $(n-d) | m$. Let $X$ be a $q$-system corresponding to an $[n,2,d]$ ATW code $\tilde{\C}$ over $\Fqm/\Fq$. Then $\tilde{\C}$ is induced by an $[l,2,l-1]$ MRD code $\C$ over $\Fqm/\F_{q^{n-d}}$ if and only if $X$ is also an $\F_{q^{n-d}}$-space.
 \end{corollary}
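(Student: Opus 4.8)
The plan is to prove the two implications separately; the forward direction is essentially a restatement of the construction, while the reverse direction carries the real work.

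For the forward implication, suppose $\tilde{\C}$ is induced by an $[l,2,l-1]$ MRD code $\C$ over $\Fqm/\F_{q^{n-d}}$ via Theorem \ref{thm:MRDto2weight}, with $\C$ having generator matrix $\GG$ and columns $\GG_1,\dots,\GG_l$. As already observed in the proof of that theorem, the $q$-system $X$ of $\tilde{\C}$, generated over $\Fq$ by the columns $a_j\GG_i$, coincides \emph{as a set} with the $q^{n-d}$-system of $\C$, since $\langle a_j\GG_i : 1\le j\le n-d\rangle_{\Fq} = \langle \GG_i\rangle_{\F_{q^{n-d}}}$ for each $i$. The latter is an $\F_{q^{n-d}}$-subspace of $\Fqm^2$, so $X$ is an $\F_{q^{n-d}}$-space.

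For the reverse implication, assume $X$ is an $\F_{q^{n-d}}$-space. Since $(n-d)\mid m$, the field $\Fqm$ is an extension of $\F_{q^{n-d}}$ and $\Fqm^2$ is an $\F_{q^{n-d}}$-vector space; as $\dim_{\Fq}X = n = l(n-d)$ and $X$ is closed under $\F_{q^{n-d}}$-scalars, we get $\dim_{\F_{q^{n-d}}}X = l$. Thus $X$ is an $[l,2]$ $q^{n-d}$-system and defines an $[l,2]$ rank metric code $\C$ over $\Fqm/\F_{q^{n-d}}$; the crux is to compute its weights. Every hyperplane $H$ of $\Fqm^2$ is the kernel of an $\Fqm$-linear map $\psi_{\xx}$, hence serves simultaneously as a hyperplane for $\tilde{\C}$ (over $\Fq$) and for $\C$ (over $\F_{q^{n-d}}$). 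Since both $X$ and $H$ are $\F_{q^{n-d}}$-spaces, so is $X\cap H$, whence $\dim_{\Fq}(X\cap H) = (n-d)\dim_{\F_{q^{n-d}}}(X\cap H)$. The ATW property of $\tilde{\C}$, via Lemma \ref{lem:hyperpartition}, forces $\dim_{\Fq}(X\cap H)\in\{0,n-d\}$, so $\dim_{\F_{q^{n-d}}}(X\cap H)\in\{0,1\}$. Applying Lemma \ref{lem:2} to $\C$ gives $\rank(\cc) = l - \dim_{\F_{q^{n-d}}}(X\cap H_{\cc})\in\{l-1,l\}$ for every nonzero $\cc\in\C$; both values occur, since a full-rank codeword of $\tilde{\C}$ yields intersection dimension $0$ and a weight-$d$ codeword yields intersection dimension $1$. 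Hence $\C$ has minimum distance $l-1 = n-k+1$, so it is an $[l,2,l-1]$ MRD code.

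It remains to check that the ATW code induced by this $\C$ is equivalent to $\tilde{\C}$. Choosing a generator matrix $\GG$ of $\C$ whose columns form an $\F_{q^{n-d}}$-basis of $X$, the construction of Theorem \ref{thm:MRDto2weight} produces $\tilde{\GG}$ whose columns $a_j\GG_i$ form an $\Fq$-basis of $X$; hence the induced code has the same $q$-system $X$ as $\tilde{\C}$, and by the correspondence between $q$-systems and rank metric codes up to equivalence \cite[Theorem 2]{Ran20} the two are equivalent. The main obstacle is the weight computation in the reverse direction: one must use that $X$ being an $\F_{q^{n-d}}$-space is precisely what makes every relevant intersection $X\cap H$ an $\F_{q^{n-d}}$-space, so that the $\Fq$-dimensions $\{0,n-d\}$ collapse cleanly to the $\F_{q^{n-d}}$-dimensions $\{0,1\}$; without this, the correspondence between the weights of the two codes would fail.
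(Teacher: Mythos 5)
Your proof is correct, and it follows the route the paper intends: the paper states this corollary without proof, treating it as immediate from Theorem \ref{thm:MRDto2weight}, whose proof already contains your two key observations (that the $\Fq$-span of the columns $a_j\GG_i$ equals the $\F_{q^{n-d}}$-span of the $\GG_i$, and that for an $\F_{q^{n-d}}$-space $X$ the intersections $X\cap H$ with hyperplanes are again $\F_{q^{n-d}}$-spaces, so their $\Fq$-dimensions are multiples of $n-d$). Your reverse direction simply runs that weight computation backwards via Lemma \ref{lem:2} and Lemma \ref{lem:hyperpartition}, which is exactly the argument being alluded to.
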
 
It is natural to ask if all ATW rank metric codes are induced by MRD codes as shown in the previous construction. The answer is affirmative when $d=n/2$.

\begin{theorem}\label{thm:3}
Let $\C$ be an $[n,2,d=n/2]$ rank metric code over $\Fqm/\Fq$. Then $\C$ is an ATW rank metric code if and only if $d$ divides $m$ and $\C$ is equivalent to a code with generator matrix 
\[
\begin{pmatrix}
0 & 0  & \dots & 0 & 1 & \alpha_2 & \dots & \alpha_d \\
1 & \alpha_2 & \dots & \alpha_d & 0 & 0  & \dots & 0
\end{pmatrix}, \text{ such that } \<1,\alpha_2,\dots,\alpha_d\>_{\Fq}=\mathbb{F}_{q^d}.
\]

\end{theorem}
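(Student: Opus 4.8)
The plan is to prove the two implications separately, with the reverse implication being essentially a restatement of Example~\ref{exa:1}. For the ``if'' direction, assume $d\mid m$ and that $\C$ is equivalent to the displayed code. Since $\langle 1,\alpha_2,\dots,\alpha_d\rangle_{\Fq}=\F_{q^d}$ and $d\mid m$ gives $\F_{q^d}\subseteq\Fqm$, a codeword $(a_1,a_2)\GG$ has rank $\dim_{\Fq}(a_1\F_{q^d}+a_2\F_{q^d})$; as $a_1\F_{q^d}$ and $a_2\F_{q^d}$ are $\F_{q^d}$-lines in $\Fqm$, this rank is $0$, $d$, or $2d=n$ according to whether none, exactly one, or both of $a_1,a_2$ are nonzero with $a_1/a_2\notin\F_{q^d}$. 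Because $n=2d\le m$ forces $\F_{q^d}\subsetneq\Fqm$, a full-rank word exists, so $\C$ is ATW. This is exactly Example~\ref{exa:1} with $\alpha_1=1$ and $\mathbb{F}=\Fqm$.

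The substance is the ``only if'' direction. First I would pass to the $q$-system $X\subseteq\Fqm^2$ of $\C$. By Theorem~\ref{pro:spread}, $X$ carries a $d$-spread $\S$ with $|\S|=\frac{q^{2d}-1}{q^d-1}=q^d+1$, and by Theorem~\ref{thm:split} (with $l=n/d=2$) there are two distinct spread elements $S_1,S_2\in\S$ with $X=S_1\oplus S_2$. Each $S_i=X\cap H_i$ for a distinct hyperplane, i.e.\ $\Fqm$-line, $H_i$. Using an $\SS\in GL_2(\Fqm)$—permitted by the generator-matrix form of equivalence $\GG\mapsto\SS\GG\MM$—I would send $H_1,H_2$ to the two coordinate axes, so that the transformed $q$-system is $V_2\times V_1$ where $V_1,V_2\subseteq\Fqm$ are $d$-dimensional $\Fq$-subspaces and $\GG$ becomes block-diagonal with its two nonzero rows spanning $V_2$ and $V_1$ respectively.

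The key step is to pin down $V_1,V_2$ by re-deriving $\H_1$ in these coordinates. Writing a hyperplane as $H_c=\{(x,cx)\}$ for $c\in\Fqm$ together with $H_\infty=\{(0,y)\}$, one computes $\dim_{\Fq}(X\cap H_c)=\dim_{\Fq}(V_2\cap c^{-1}V_1)$, which equals $d$ exactly when $cV_2=V_1$, while $H_0$ and $H_\infty$ always lie in $\H_1$. Hence $\H_1=\{H_0,H_\infty\}\sqcup\{H_c:c\in\Fqm^{\times},\,cV_2=V_1\}$, and comparison with $|\H_1|=q^d+1$ forces $|\{c\in\Fqm^{\times}:cV_2=V_1\}|=q^d-1>0$. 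This nonempty set is a coset of the stabilizer $K^{\times}:=\{s\in\Fqm^{\times}:sV_2=V_2\}$, and $K:=K^{\times}\cup\{0\}$ is a subfield of $\Fqm$ containing $\Fq$ (closure under addition follows from $sv+s'v=(s+s')v$). Therefore $|K|-1=q^d-1$, so $K=\F_{q^d}\subseteq\Fqm$, giving $d\mid m$, and $V_2$ is a $1$-dimensional $\F_{q^d}$-space; the relation $V_1=c_0V_2$ makes $V_1$ one as well. I expect this stabilizer/counting step to be the main obstacle, since it is where both the divisibility $d\mid m$ and the rigidity of the subfield structure are extracted.

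Finally, with $V_2=\beta\F_{q^d}$ and $V_1=\gamma\F_{q^d}$, I would rescale the two rows by $\beta^{-1}$ and $\gamma^{-1}$ (a diagonal $\SS$) so that both blocks span $\F_{q^d}$, apply block-diagonal column operations $\MM\in\Fq^{n\times n}$ to turn each nonzero row into a common basis $(1,\alpha_2,\dots,\alpha_d)$ of $\F_{q^d}$, and swap the two rows to reach exactly the displayed generator matrix with $\langle 1,\alpha_2,\dots,\alpha_d\rangle_{\Fq}=\F_{q^d}$. Combining the two directions yields the claimed equivalence.
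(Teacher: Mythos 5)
Your argument is correct in both directions, and the forward direction takes a genuinely different route from the paper's at the decisive step. You share the starting point, namely the block-diagonal reduction coming from the spread decomposition (the paper invokes Theorem \ref{decomposition}; you re-derive it from Theorems \ref{pro:spread} and \ref{thm:split}). From there the paper argues directly on codewords: after normalizing the two rows to $(\0|\ee_1)$ and $(\ee_2|\0)$ with both blocks containing $1$, it observes that a codeword whose two halves share a common nonzero entry cannot have rank $n$ and hence has rank $d$; applied to $(\ee_2|\ee_1)$ this forces the entries of $\ee_2$ to lie in $\<1,\alpha_2,\dots,\alpha_d\>_{\Fq}$, and applied to $(\ee_1|\alpha_i\ee_1)$ it forces $\alpha_i\alpha_j\in\<1,\alpha_2,\dots,\alpha_d\>_{\Fq}$, so the span is multiplicatively closed and is therefore the subfield $\Fqd$. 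You instead exploit the exact count $|\H_1|=q^d+1$ from Theorem \ref{pro:spread}: writing $X=V_2\times V_1$, you identify $\H_1$ as $\{H_0,H_\infty\}$ together with a coset of the stabilizer $K^\times=\{s\in\Fqm^\times: sV_2=V_2\}$, deduce $|K^\times|=q^d-1$, and show $K=K^\times\cup\{0\}$ is a subfield, forcing $K=\Fqd$ and $V_1,V_2$ to be $\Fqd$-lines. The paper's route is the more elementary one, since it needs only the antipodal dichotomy and not the cardinality of $\H_1$; yours is more structural and has the side benefit of exhibiting $\H_1$ explicitly as $\{H_0,H_\infty\}\cup\{H_c\colon c\in c_0\Fqd^\times\}$, which makes Corollary \ref{cor:desarguesiansubspread} transparent. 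The only detail worth spelling out in your stabilizer step is that closure of $K$ under addition uses a dimension count: $(s+s')V_2\subseteq V_2$ together with $\dim_{\Fq}\bigl((s+s')V_2\bigr)=d$ when $s+s'\neq 0$ gives equality.
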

\begin{proof}
 Following Theorem \ref{decomposition}, we know that $\C$ has a generator matrix of the form $\GG=[\GG_1|\GG_2]$ where the columns of $\GG_1$ (resp. $\GG_2$) belong to a hyperplane $H_1$ (resp. $H_2$). For $i=1,2$, let $\xx_i\in \Fqm^2$ such that $H_i = \mathrm{Ker}~\psi_{\cc_i}$ where $\cc_i = \xx_i\GG$ as in Equation \eqref{eq:psi}. Then $\C$ has another generator matrix 
$
\GG' = \begin{pmatrix}
\xx_1 \\
\xx_2
\end{pmatrix} \GG = 
\begin{pmatrix}
\0 & \ee_1 \\
\ee_2 & \0
\end{pmatrix},
$
where  $\ee_1$ and $\ee_2$ both have rank $d$. 

Without loss of generality we can assume that $\ee_1=(1,\alpha_2\dots,\alpha_d)$ and $\ee_2=(1,\beta_2,\dots,\beta_d)$ where $\alpha_i, \beta_i \in \Fqm^{\times}$. Take the codeword $(\ee_2|\ee_1)$, because the first and second part both contain $1$, then $0<\rank(\ee_2|\ee_1)<n$ and therefore, $\rank(\ee_2|\ee_1)=d$. Thus the $\beta_i$'s are $\Fq$-linear combination of the $\alpha_i's$ and $1$. Hence, $\C$ is equivalent to a code with generator matrix $\overline{\GG}$ of the form
$
\overline{\GG}=\begin{pmatrix}
\0 & \ee_1 \\
\ee_1 & \0
\end{pmatrix}.
$
For $i=2,\dots,d$, let $\cc^{(i)}=(\ee_1|\alpha_i\ee_1)$. Because the first and second part of $\cc^{(i)}$ contain $\alpha_i$, then $0<\rank(\cc^{(i)})<n$ and thus $\rank(\cc^{(i)}) = d$. This implies that $\alpha_i\alpha_j\in \<1,\alpha_2,\dots,\alpha_d\>_{\Fq}$ for all $2\leq i,j\leq d$. So the space $\<1,\alpha_2,\dots,\alpha_d\>_{\Fq}$ is in fact a subfield, say $K$, of $\Fqm$ with $[K:\Fq]=d$ and hence $d \mid m$.
In this case, $\C$ is a $[2d,2,d]$ rank metric code equivalent to a code with generator matrix 
\[
\GG" = \begin{pmatrix}
0 & 0  & \dots & 0 & 1 & \alpha_2 & \dots & \alpha_d \\
1 & \alpha_2 & \dots & \alpha_d & 0 & 0  & \dots & 0
\end{pmatrix}, \text{ such that } \<1,\alpha_2,\dots,\alpha_d\>_{\Fq}=\mathbb{F}_{q^d}. 
\]
For the converse, it follows from the arguments in Example \ref{exa:1} that $\GG"$ generates a non-degenerate ATW rank metric codes.
\qed\end{proof}

For $d=n/2$, we showed that the ATW rank metric code essentially corresponds to a Desarguesian spread $(\mathbb{F}_{q^{n/2}},\Delta)$. This gives an interesting property of Desarguesian subspreads.

\begin{corollary}\label{cor:desarguesiansubspread}
Let $(\Fqm^2,\D)$ be a Desarguesian spread. Let $X\subseteq \Fqm^2$ be an $\Fq$-space of dimension $n$ and suppose that $(\Fqm^2,\D)$ induces a subspread i.e. $n/2$-subspread, $(X,\D_{X})$ of $X$, then $(X,\D_{X})$ is a Desarguesian spread.
\end{corollary}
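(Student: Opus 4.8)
The plan is to reduce the statement to the classification of antipodal two-weight codes with $d=n/2$ from Theorem \ref{thm:3}, using the code-to-spread dictionary of Theorem \ref{thm:6}. We may assume $n<2m$, since otherwise $X=\Fqm^2$ and $\D_X=\D$ is Desarguesian by hypothesis. First I would choose a $2\times n$ matrix $\GG$ whose columns form an $\Fq$-basis of $X$, so that $\GG$ generates a non-degenerate rank metric code $\C$ with associated $q$-system $X$. Setting $d=n/2$, the members of $\D_X$ have $\Fq$-dimension $n-d$, so $(X,\D_X)$ is an $(n-d)$-subspread of $(\Fqm^2,\D)$; the implication $\eqref{thm:6-ii}\Rightarrow\eqref{thm:6-i}$ of Theorem \ref{thm:6} then shows that $\C$ is an $[n,2,n/2]$ ATW rank metric code.

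Next I would invoke Theorem \ref{thm:3}, which gives $d\mid m$ and puts $\C$ into standard form: up to equivalence, $\C$ has a generator matrix $\GG''$ whose rows are built from an $\Fq$-basis $\{1,\alpha_2,\dots,\alpha_d\}$ of $\Fqd$, with $d=n/2$. The $q$-system $X''$ attached to $\GG''$ is the $\Fq$-span of its columns, namely $\Fqd\times\Fqd=\Fqd^2\subseteq\Fqm^2$. I would then determine the induced spread on $X''$ directly: for a line $S=\Fqm(c,e)\in\D$ meeting $X''$ in a nonzero vector $(a,b)$ with $a,b\in\Fqd$, the fact that the slope $b/a$ lies in $\Fqd$ forces $S\cap X''=\Fqd(a,b)$, a $1$-dimensional $\Fqd$-subspace; the degenerate case $a=0$ is identical. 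Conversely every $1$-dimensional $\Fqd$-subspace of $\Fqd^2$ is such an intersection. Hence $\D_{X''}$ is exactly the set of $1$-dimensional $\Fqd$-subspaces of $\Fqd^2$, which by Definition \ref{def:desarguesian} is the Desarguesian spread $\mathcal{D}_{2,d,q}$.

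Finally I would transport this conclusion back to $(X,\D_X)$ through the code equivalence. Writing $\GG=\SS\,\GG''\MM$ with $\SS\in\mathrm{GL}_2(\Fqm)$ and $\MM\in\Fq^{n\times n}$ invertible, the $\Fq$-spans of the columns satisfy $X=\SS X''$, because $\MM$ preserves the $\Fq$-column span while $\SS$ acts $\Fqm$-linearly. Since $\SS$ sends $1$-dimensional $\Fqm$-subspaces to $1$-dimensional $\Fqm$-subspaces, it permutes the members of $\D$, so that $\D_X=\{S\cap X:S\in\D\}=\SS(\D_{X''})$. Thus $\SS$ restricts to a collineation carrying $(X'',\D_{X''})$ onto $(X,\D_X)$, whence $(X,\D_X)$ is equivalent to $\mathcal{D}_{2,d,q}$ and is itself Desarguesian.

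I expect the last step to be the main obstacle: one must verify carefully that the rank-metric equivalence of \cite{Mor14} induces an equivalence of the two subspreads respecting the Desarguesian structure, i.e. that the $\Fqm$-linear factor $\SS$ both maps $X''$ isomorphically onto $X$ and stabilizes $\D$ setwise, so that $\D_{X''}$ is carried precisely onto $\D_X$. The explicit computation of $\D_{X''}$ in the second paragraph is routine but should be checked in both the $a\neq0$ and $a=0$ cases so as to cover all lines of $\D$ that meet $X''$.
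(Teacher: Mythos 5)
Your proposal is correct and follows essentially the same route as the paper's proof: use Theorem \ref{thm:6} to recognize $X$ as the $q$-system of an $[n,2,n/2]$ ATW code, then invoke Theorem \ref{thm:3} to put it in standard form and read off that the spread elements are $\mathbb{F}_{q^{n/2}}$-subspaces. The only difference is that you spell out what the paper leaves implicit --- the edge case $n=2m$, the explicit identification of $\D_{X''}$ with $\mathcal{D}_{2,d,q}$, and the transport of the spread through the equivalence $\GG=\SS\,\GG''\MM$ --- all of which check out.
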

\begin{proof}
Since $(X,\D_{X})$ is a subspread, then by definition, the elements of $\D_{X}$ has dimension $n/2$. Since $(X, \D_{X})$ is a subspread induced by a Desarguesian spread, following Theorem \ref{thm:6}, $X$ defines an $[n,2,n/2]$ ATW rank metric code. Now Theorem \ref{thm:3} gives a generator matrix of the ATW code, from which it is clear that each element $S$ of $\D_X$ is an $\mathbb{F}_{q^{n/2}}$ vector space. So the induced subspread $(X,\D_X)$ is a Desarguesian spread of $X$. 
\qed\end{proof}

\section{Two-weight Hamming metric codes from two-weight rank metric codes}\label{sec:5}

As mentioned in the Introduction, there are already existing constructions of two-weight Hamming metric codes. In this section, as a small appendix to this work, we provide constructions of other two-weight Hamming metric codes via the expansion of rank metric codes as presented in \cite{ABNR21}. 
 To do this, it is better to also use the geometric approach for Hamming metric codes as introduced in \cite{TV91}. We only consider \emph{non-degenerate} Hamming metric codes i.e. no generator matrix of the code has a zero column.
 
\begin{definition}[Projective system]
Let $\PP=\PP(\Fq^k)$ be a projective space of dimension $k-1$ over a finite field $\Fq$. A projective $[n,k]_q$ system is a finite unordered family $\P$ of points of $\PP$, which does not lie in a (projective) hyperplane and such that $|\P|=n$ (counted with multiplicity).
\end{definition}
 
\begin{definition}[Projective system associated to a code]
Let $\C$ be an $[n,k]$ non-degenerate Hamming metric code over $\Fq$ and let $\GG\in \Fq^{k\times n}$ be a generator matrix of $\C$. A projective system $\P(\C)$ associated to $\C$ is the projective $[n,k]_q$ system in $\PP(\Fq^k)$ consisting of the columns of $\GG$ (with multiplicity).
\end{definition}

On the other hand, given a projective system $\P$ in $\PP(\Fq^k)$ of size $n$, one can consider a $k \times n$ matrix $\GG$ with the elements in the projective system $\P$ as its columns. This $k \times n$ matrix $\GG$ generates an $[n,k]$ code $\C(\P)$ over $\Fq$. In \cite[Chapter 1.1]{TV91}, Tsfasman and Vladut showed that, up to some notion of equivalences, there is a one-to-one correspondence between linear codes with Hamming metric and projective systems.

As it was shown in \cite{TV91}, if $\cc=\xx_\cc \GG$ is a codeword of $\C$, then the Hamming weight of $\cc$ is given by
 \[
\hw(\cc)= n-|\P\cap \H_\cc|,
\]
where the cardinality on the right-hand side is counted with multiplicity and $\H_\cc=\mathrm{Ker}~ \psi_{\xx_\cc}$ for $\psi_{\xx_\cc}$ as defined in Equation \eqref{eq:psi}. This is in fact the analogue of Lemma \ref{lem:2} and it is the key to the construction of Hamming metric codes from rank metric codes which we explain next.

\begin{definition}\label{def:12}
Let $X\subset \Fqm^k$ be a $q$-system and let $\sim$ be the equivalence relation on $X$ given by $\xx\sim \yy$ for $\xx,\yy\in X$ if and only if $\xx=\alpha\yy$ for some $\alpha\in \Fq^\times$. The projective system associated to $X$ is the set $\P(X)=(X/\sim) \backslash\{\0\}\subset \PP(\Fqm^k)$. 
\end{definition}

It follows easily from Definition \ref{def:12} that given an $[n,k]$ $q$-system $X$ in $\Fqm^k$, the associated projective system $\P(X)$ is an $\left[\frac{q^n-1}{q-1},k\right]$ projective system in $\PP(\Fqm^k)$.

 \begin{definition}\cite{ABNR21}
 Let $\C$ be an $[n,k]$ rank metric code over $\Fqm/\Fq$ corresponding to a $q$-system $X\subset \Fqm^k$ of dimension $n$ over $\Fq$.  A Hamming metric code $\fH(\C)$ associated to $\C$ is a $\left[\frac{q^n-1}{q-1},k\right]$ Hamming metric code over $\Fqm$ corresponding to the projective system $\P(X)$.
 \end{definition}
 
 For a rank metric code $\C$, the Hamming weights of the codewords in an associated Hamming metric code $\fH(\C)$ are determined by the rank weights of the codewords of $\C$.
 
 \begin{lemma}\label{weight-corr}\cite[Sec 4.2]{ABNR21}
 Let $\C$ be an $[n,k]$ rank metric code over $\Fqm/\Fq$ corresponding to a $q$-system $X\subset \Fqm^k$ with generator matrix $\GG\in \Fqm^{k\times n}$. Suppose that $\fH(\C)$ is a Hamming metric code associated to $\C$ with generator matrix $\GG' \in \Fqm^{k\times \left(\frac{q^n-1}{q-1}\right)}$ and let $\xx\in \Fqm^k$. If $\rank(\xx\GG) = t$, then $\hw(\xx\GG')=\frac{q^n-q^{n-t}}{q-1}$.
 \end{lemma}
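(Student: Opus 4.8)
The plan is to combine the Hamming-weight formula of Tsfasman--Vladut (the displayed identity immediately preceding the statement) with the dimension formula of Lemma \ref{lem:2}. Write $\H_\xx = \mathrm{Ker}~\psi_\xx$ for the hyperplane of $\Fqm^k$ determined by $\xx$. Since the length of $\fH(\C)$ is $|\P(X)| = \frac{q^n-1}{q-1}$, that formula gives
\[
\hw(\xx\GG') = \frac{q^n-1}{q-1} - |\P(X)\cap \H_\xx|,
\]
so the whole problem reduces to counting the points of $\P(X)$ that lie on the hyperplane $\H_\xx$.

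First I would identify the points of $\P(X)$ lying in $\H_\xx$ with the $1$-dimensional $\Fq$-subspaces of $X$ contained in $\H_\xx$. By Definition \ref{def:12}, $\P(X)$ consists precisely of the $\Fq$-lines through the nonzero vectors of $X$, each occurring with multiplicity one because $\P(X)$ is taken as a set of equivalence classes. Hence a point of $\P(X)$ lies in $\H_\xx$ exactly when the corresponding $\Fq$-line is contained in the $\Fq$-space $X \cap \H_\xx$, and therefore $|\P(X)\cap \H_\xx|$ equals the number of $1$-dimensional $\Fq$-subspaces of $X\cap \H_\xx$, namely $\frac{q^s-1}{q-1}$ where $s = \dim_{\Fq}(X\cap \H_\xx)$.

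Next I would apply Lemma \ref{lem:2}: since $\rank(\xx\GG)=t$, equation \eqref{eq:rank} yields $s = \dim_{\Fq}(X\cap \H_\xx) = n-t$, so $|\P(X)\cap \H_\xx| = \frac{q^{n-t}-1}{q-1}$. Substituting back gives
\[
\hw(\xx\GG') = \frac{q^n-1}{q-1} - \frac{q^{n-t}-1}{q-1} = \frac{q^n-q^{n-t}}{q-1},
\]
which is the claim. (The degenerate case $t=n$, where $X\cap\H_\xx=\{\0\}$ and $s=0$, is automatically covered since $\frac{q^0-1}{q-1}=0$.)

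There is no serious obstacle here; the computation is routine once the two ingredients are assembled. The only point requiring care is the compatibility of the two notions of hyperplane: the $\H_\cc$ appearing in the Tsfasman--Vladut formula is a projective hyperplane of $\PP(\Fqm^k)$, that is, the $\Fqm$-hyperplane $\mathrm{Ker}~\psi_\xx$, and one must check that intersecting it with $X$ and taking the $\Fq$-dimension is exactly the quantity controlled by Lemma \ref{lem:2}, which considers the same kernel as an $\Fq$-subspace. Once this identification and the multiplicity-one bookkeeping are settled, the projective count of $\Fq$-lines in $X\cap\H_\xx$ followed by the substitution completes the proof.
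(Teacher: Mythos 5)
Your proof is correct, and it is exactly the argument the paper sets up: the paper itself cites this lemma from \cite[Sec 4.2]{ABNR21} without proof, but the Tsfasman--Vladut weight formula and Lemma \ref{lem:2} are stated immediately beforehand precisely so that this computation goes through. Your count of $|\P(X)\cap \H_\xx|$ as the number of $\Fq$-lines in $X\cap\H_\xx$, i.e. $\frac{q^{n-t}-1}{q-1}$, together with the multiplicity bookkeeping for Definition \ref{def:12}, is the right way to fill in the omitted details.
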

From Lemma \ref{weight-corr}, a rank weight in the rank metric code $\C$ corresponds to a unique Hamming weight in the associated Hamming metric code $\fH(\C)$. Therefore we have the following theorem.

\begin{theorem}
Let $\C$ be an $[n,k,d]$ rank metric code over the extension $\Fqm/\Fq$. If $\C$ is a two-weight rank metric code, then the associated Hamming code $\fH(\C)$ is a $\left[\frac{q^n-1}{q-1},k,\frac{q^n-q^{n-d}}{q-1}\right]$ two-weight Hamming metric code. Moreover, if $\C$ is an antipodal two-weight rank metric code, then $\fH(\C)$ is an antipodal two-weight Hamming metric code.
\end{theorem}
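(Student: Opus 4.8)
The plan is to leverage the weight correspondence in Lemma~\ref{weight-corr}, which says that a rank weight $t$ in $\C$ is mapped to the unique Hamming weight $\frac{q^n-q^{n-t}}{q-1}$ in $\fH(\C)$. The key observation is that the map $t \mapsto \frac{q^n-q^{n-t}}{q-1}$ is strictly increasing in $t$ on the range $0 \le t \le n$, and it sends $t=0$ to $0$ and $t=n$ to $\frac{q^n-1}{q-1}$, the length of $\fH(\C)$. Hence distinct rank weights in $\C$ correspond to distinct, nonzero Hamming weights in $\fH(\C)$, and full rank $n$ corresponds exactly to full Hamming weight equal to the length.

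First I would record that $\fH(\C)$ is a non-degenerate $\left[\frac{q^n-1}{q-1},k\right]$ Hamming metric code over $\Fqm$; this is immediate from the discussion following Definition~\ref{def:12}, since the projective system $\P(X)$ associated to the $[n,k]$ $q$-system $X$ has exactly $\frac{q^n-1}{q-1}$ points and spans $\PP(\Fqm^k)$. Next, assuming $\C$ is a two-weight rank metric code, its nonzero codewords have rank weights in $\{d,d'\}$ with $0<d<d'$. Applying Lemma~\ref{weight-corr} to each nonzero codeword, the set of nonzero Hamming weights of $\fH(\C)$ is exactly $\left\{\frac{q^n-q^{n-d}}{q-1},\, \frac{q^n-q^{n-d'}}{q-1}\right\}$, which contains exactly two distinct values because $d<d'$ and the correspondence is injective. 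The minimum Hamming distance is therefore $\frac{q^n-q^{n-d}}{q-1}$, coming from the minimum rank distance $d$, so $\fH(\C)$ is a $\left[\frac{q^n-1}{q-1},k,\frac{q^n-q^{n-d}}{q-1}\right]$ two-weight Hamming metric code as claimed.

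For the ``moreover'' part, suppose in addition that $\C$ is antipodal, so $d'=n$ and there is at least one codeword of full rank $n$. Under the correspondence, any such full-rank codeword maps to a codeword of Hamming weight $\frac{q^n-q^{0}}{q-1}=\frac{q^n-1}{q-1}$, which is precisely the length of $\fH(\C)$. Thus $\fH(\C)$ has nonzero Hamming weights only $\frac{q^n-q^{n-d}}{q-1}$ and the full length $\frac{q^n-1}{q-1}$, with at least one codeword of full weight; and since $d<n$ the smaller weight is strictly less than the length. This is exactly the defining condition of an antipodal two-weight Hamming metric code, completing the argument.

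I do not expect a serious obstacle here, since essentially everything is a direct translation through Lemma~\ref{weight-corr}; the only point requiring a line of care is verifying that the weight map is injective (equivalently, strictly monotone) so that ``two rank weights'' genuinely becomes ``two Hamming weights'' rather than collapsing, and that the rank-$n$ case lands on the full length. Both follow from the explicit formula $\frac{q^n-q^{n-t}}{q-1}$, so the proof is short.
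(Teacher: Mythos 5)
Your proposal is correct and follows essentially the same route as the paper: both proofs consist of applying Lemma~\ref{weight-corr} to translate the rank weight distribution $\{0,d,d'\}$ into the Hamming weight distribution $\bigl\{0,\frac{q^n-q^{n-d}}{q-1},\frac{q^n-q^{n-d'}}{q-1}\bigr\}$. Your added remarks on the strict monotonicity of $t\mapsto\frac{q^n-q^{n-t}}{q-1}$ and on rank $n$ landing exactly on the full length are small but worthwhile details that the paper's one-line proof leaves implicit, particularly for the antipodal claim.
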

\begin{proof}
Following Lemma \ref{weight-corr}, a codeword in $\C$ of rank $t$ corresponds to a codeword of $\fH(\C)$ with Hamming weight $\frac{q^n-q^{n-t}}{q-1}$. Hence if the weight distribution of $\C$ is $\{0,d,d'\}$, then the weight distribution of $\fH(\C)$ is $\left\lbrace 0,\frac{q^n-q^{n-d}}{q-1},\frac{q^n-q^{n-d'}}{q-1} \right\rbrace$.
\end{proof}

As a consequence of the previous theorem, Example \ref{exa:1} and Theorem \ref{thm:MRDto2weight} provide constructions of antipodal two-weight Hamming metric codes. Furthermore, Example \ref{exa:2} provides constructions of non-antipodal two-weight Hamming metric codes. 
\section*{Open questions}

We end this paper with some open questions. 
Theorem \ref{thm:MRDto2weight} gives a construction of $[n,2,d]$ ATW rank metric codes over $\Fqm /\Fq$ from suitable MRD codes over $\Fqm/\F_{q^{n-d}}$. In fact, for the case $d = n/2$ this is the only way of constructing ATW rank metric codes as proved in Theorem \ref{thm:3}. So it is natural to ask if this happens for the case when $d > n/2$? In other words, are there any $[n,2,d]$ ATW rank metric code which are not induced by MRD codes? Equivalently, in terms of $t$-spreads, one may ask if a $t$-subspread of a Desarguesian spread is again a Desarguesian $t$-spread. In other words, is there a non-Desarguesian t-subspread of a Desarguesian spread?

The main focus of this work was the study of antipodal two-weight rank metric codes. However, one may also want to explore the classification and characterization of non-antipodal two-weight rank metric codes.

%
%
%

\begin{thebibliography}{8}

 \bibitem{ABNR21} G.N. Alfarano, M. Borello, A. Neri, and A. Ravagnani, \emph{Linear cutting blocking sets and minimal codes in the rank metric}, J. Combin. Theory Ser. A {\bfseries 192} (2022), paper No. 105658.
 
 \bibitem{BJJ01} M. Biliotti, V. Jha, and N.L. Johnson, \emph{Foundations of translation planes}, Monographs and Textbooks in Pure and Applied Mathematics, {\bfseries 243}, Marcel Dekker, Inc., New York, 2001, xvi+542 pp.
 
 \bibitem{NVB07} M, Biliotti, V. Jha, and N.L. Johnson, \emph{Handbook of finite translation planes}, Pure and Applied Mathematics (Boca Raton), {\bfseries 289}, Chapman and Hall/CRC, 2007, xxii+861 pp.
 
 \bibitem{Bon83} A. Bonisoli, \emph{Every equidistant linear code is a sequence of dual Hamming codes}, Ars Combin., {\bfseries 18} (1983), pp. 181--186. 
 
 \bibitem{CK86} R. Calderbank and W.M. Kantor, \emph{The geometry of two-weight codes}, Bulletin of the London Mathematical Society, {\bfseries 18} (1986), no. 2, Wiley Online Library, pp. 97--122.
 
 \bibitem{Del78}
P. Delsarte, \textit{Bilinear forms over a finite field, with applications to coding theory}, J. Combin. Theory Ser. A,  \textbf{25} (1978), no. 3, pp. 226--241.

 \bibitem{DW05} C. Ding and X. Wang, \emph{A coding theory construction of new systematic authentication codes}, Theor. Comput. Sci., 330(1):81–99 (2005).

 
 \bibitem{Gab85} E. M. Gabidulin, \emph{Theory of codes with maximum rank distance}, Probl. Inf. Transm. 21 (1985), 1-12.
 
 \bibitem{GK05} E.M. Gabidulin, A. Kshevetskiy, \emph{The new construction of rank codes}, Proceedings, ISIT 2005.
 
 \bibitem{Hir79} J. Hirschfeld, \emph{Projective geometries over finite fields}, Oxford Mathematical Monographs, Second edition, The Clarendon Press, Oxford University Press, New York, 1998, xii+555 pp.
 
 \bibitem{JT18} D. Jungnickel and V. D. Tonchev, \emph{The classification of antipodal two-weight linear codes}, Finite Fields Appl. {\bfseries 50} (2018), pp. 372--381.
 
 \bibitem{MGR08} F. Manganiello, E. Gorla, and J. Rosenthal, \emph{Spread codes and spread decoding in network coding}, 2008 IEEE International Symposium on Information Theory, 2008.
 
 \bibitem{Massey93} J. L. Massey, \emph{Minimal codewords and secret sharing}, in Proc. 6th Joint Swedish-Russian Workshop on Information Theory, Mölle, Sweden (1993), pp. 276–-279.
 
 \bibitem{Mor14}  K. Morrison, \emph{Equivalence for rank metric and matrix codes and automorphism groups of Gabidulin codes}, IEEE Trans. Inf. Theory, {\bfseries 60} (2014), no. 11, pp. 7035--7046.
 
 \bibitem{PR22} R. Pratihar and T.H. Randrianarisoa, \emph{Constructions of optimal rank metric codes from automorphisms of rational function fields}, Adv. Math. Commun., 2022.
 
 \bibitem{Ran20} T.H. Randrianarisoa, \emph{A geometric approach to rank metric codes and a classification of constant weight codes}, Des. Codes Cryptogr., {\bfseries 88} (2020), no. 7, pp. 1331--1348.
 
\bibitem{Seg64} B. Segre, \emph{Teoria di Galois, fibrazioni proiettive e geometrie non Desarguesiane}, Annali di Matematica Pura ed Applicata (4), {\bfseries 64} (1964), pp. 1--76. 

\bibitem{She16} J. Sheekey, \emph{A new family of linear maximum rank distance codes}, Adv. Math. Commun. {\bfseries 10} (2016), no. 3, pp. 475 --488.

\bibitem{TV91} M. Tsfasman and S.G. Vladut, \emph{Algebraic-geometric codes} (Vol. 58), Springer Science \& Business Media (1991).

\bibitem{Ward99} H.N. Ward, \emph{An introduction to divisible codes}, Des. Codes Cryptogr. {\bfseries 17} (1999), Springer, no. 1--3,  pp. 73--79.

\bibitem{YD02} J. Yuan and C. Ding, \emph{Secret sharing schemes from two-weight codes}, in Proc. R. C. Bose Centenary Symp. Discrete Mathematics and Applications, India, (2002).

 



\end{thebibliography}
%

\end{document}